\newcommand{\arr}{\longleftrightarrow}
\renewcommand{\L}{\Lambda}
\renewcommand{\a}{\alpha}
\newcommand{\g}{\gamma}
\newcommand{\e}{\varepsilon}
\newcommand{\N}{\mathbb{N}}
\newcommand{\Z}{\mathbb{Z}}
\renewcommand{\H}{\mathbb{H}}
\renewcommand{\U}{\mathbb{U}}
\newcommand{\R}{\mathbb{R}}
\newcommand{\one}{\mathbb{I}}
\newcommand{\mS}{\mathcal{S}}
\newcommand{\ovphi}{\overline \varphi}
\renewcommand{\d}{\partial}
\newcommand{\hs}{\hat{\sigma}}
\newcommand{\be}{\begin{equation}}
\newcommand{\ee}{\end{equation}}
\newcommand{\benn}{\begin{equation*}}
\newcommand{\eenn}{\end{equation*}}
\newcommand{\baln}{\begin{align*}}
\newcommand{\ealn}{\end{align*}}
\newcommand{\bal}{\begin{align}}
\newcommand{\eal}{\end{align}}
\newcommand{\bee}{\begin{eqnarray}}
\newcommand{\eee}{\end{eqnarray}}
\newcommand{\beenn}{\begin{eqnarray*}}
\newcommand{\eeenn}{\end{eqnarray*}}
\newtheorem{theorem}{Theorem}
\newtheorem{lemma}[theorem]{Lemma}
\newtheorem{cor}[theorem]{Corollary}
\newtheorem{prop}[theorem]{Proposition}
\title{Another proof of $p_{c} = \tfrac{\sqrt{q}}{1+\sqrt{q}}$ on $\Z^2$ with $q \in [1,4]$ for random-cluster model.} 
\author{E.~Mukoseeva \thanks{ Scuola Internazionale Superiore di Studi Avanzati (SISSA)}
\and D.~Smirnova  \thanks{Universit\'e de Gen\`eve}}
\date{\today} 
\begin{document}

\maketitle

\begin{abstract}
In this paper we give another proof of $p_{c} = \tfrac{\sqrt{q}}{1+\sqrt{q}}$ for random-cluster model in $\Z^2$ for $q \in [1,4]$, based on the method of parafermionic observables.
\end{abstract}

\section{Introduction}


\par Consider a finite connected subgraph $\Omega$ of $\Z^2$ and denote the set of its edges by $E(\Omega)$, the set of its vertices by $V(\Omega)$ and  the set of its faces by $F(\Omega)$. 
Let us also define its boundary vertices $\d \Omega$ to be the set of vertices of $\Omega$ with one neighbour in $\Z^2 \backslash \Omega$.   
An edge is called a {\it{boundary edge}} if its ends belong to $\d \Omega$, otherwise, the edge is called an {\it{inner edge}}. 
Let us define a {\it{configuration}} $\omega$ as an element of the set $\{0,1\}^{E(\Omega)}$. An edge $e \in E(\Omega)$ is called {\it{open}} in $\omega$ if $\omega(e)=1$, it is called {\it{closed}} otherwise. 
We say that two points $x , y \in V(\Omega)$ are connected in $\omega$ if there is a path of open edges connecting them, this event is denoted by $x \arr y$. If we are allowed to use only open edges from $A \subset \Omega$, we use the notation  $x \underset{A} \arr y$.
A maximal set of vertices connected to each other is called a {\it{cluster}}.

\par The random-cluster model is a family of probability measures on  $\{0,1\}^{E(\Omega)}$ introduced in 1969 by K.~Fortuin and P.~Kasteleyn \cite{FK72, F272, F372} as a two-parametric generalisation of Bernoulli percolation.
It depends on two parameters $p \in [0,1]$ and $q \in \R^+$ and defined as
\be \label{def}
\phi_{\Omega, p ,q}^ \# (\omega) = \frac{1}{Z^\#_{\Omega, p, q}} q^{k^{\#}(\omega)} \prod_{e \in E(\Omega)}	p^{\omega(e)}(1-p)^{1-\omega(e)},
\ee 
where the normalisation coefficient 
\be \label{Z}
Z_{\Omega, p, q}^{\#} = \sum_{w \in \{0,1\}^{E(\Omega)}} q^{k^{\#}(\omega)} \prod_{e \in E(\Omega)}	p^{\omega(e)}(1-p)^{1-\omega(e)}
\ee
is called the {\it{partition function}}. The notation $k^{\#}(\omega)$ is defined as follows:
if $\# = 0$, then $k^0(\omega)$ is the number of clusters in $\omega$ and corresponds to the measure $\phi^0_{\Omega, p ,q} (\omega) $ (in this case, we speak of the random-cluster measure with {\it{free boundary conditions}}), 
$k^1(\omega)$ is the number of clusters in $\omega$ when all boundary vertices belong to one cluster. Then, all clusters touching the boundary are counted as one. This measure is denoted $\phi^1_{\Omega, p ,q} (\omega)$ and called the random-cluster measure with {\it{wired boundary conditions}}.
It is also possible to define the random-cluster measure for any intermediate boundary conditions $\xi$. 
\par We can define random-cluster measures 
on an infinite graph by taking the limit of measures 
$\lim_{i  \rightarrow \infty} \phi^0_{G_i,p,q}$ (resp. $\lim_{i \rightarrow \infty} \phi^1_{G_i,p,q}$) where $G_i$ is an increasing sequence of finite graphs converging to the infinite graph (for example boxes of increasing size).

When $q = 1$, all measures are equal to the product measure used in Bernoulli percolation. 
When $q$ tends to zero, the random-cluster measure converges to uniform measures on connected subgraphs, spanning trees or spanning forests depending on the behaviour of $p$ \cite{Gr06}.

\par An important property of the random-cluster model is its connection to Ising and Potts models.
The Edward-Sokal coupling \cite{ES88} is the most standard way to relate random-cluster model for some $p$ and integer $q \ge 2$ with the Potts model with $q$ colours and the inverse temperature $\beta$ such that $p = 1 - e^{-\beta}$, it was done by constructing a correspondence between their configurations.
Many results about Ising and Potts models were found using this construction.

\par As Bernoulli percolation and the Ising and Potts models, the random-cluster model undergoes a phase transition. 
For any value of $q$, there exists $p_{c} = p_{c}(q) \in (0,1)$ such that for $p > p_{c}$ there is almost surely an infinite connected component for a configuration on $\Z^2$ with arbitrary boundary conditions.
If $p<p_{c}$ there is no infinite connected component.
\par The exact value of $p_{c}(q)$ was one of the most important questions not only for the random-cluster model itself, but also for Potts and Ising models, where it gives
the exact value of inverse critical temperature $\beta_{cr}$.
\par It was proven in 2012 by H.~Duminil-Copin and V.~Beffara \cite{BDC12} that $p_{c}$ is equal to the self-dual point, which gives the relation $p = p^* = p_{c}$ between $p$ and the parameter $p^*$ on the corresponding dual graph (see Section \ref{dual}). More precisely,

\begin{theorem}
\label{Th0}
On $\Z^2$, we have that for every $q \ge 1$,
\be 
p_{c}(q) = p_{sd}(q) =  \frac{\sqrt{q}}{1+\sqrt{q}}.
\ee
\end{theorem}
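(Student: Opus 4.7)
The plan is to prove the two matching bounds $p_c(q) \ge p_{sd}(q)$ and $p_c(q) \le p_{sd}(q)$. For the first I would use planar duality alone: to every configuration $\omega$ on $\Omega$ one associates a dual configuration $\omega^*$ on the dual graph $\Omega^*$ by $\omega^*(e^*) = 1-\omega(e)$, and an Euler-formula computation identifies $\phi^0_{\Omega,p,q}$ with $\phi^1_{\Omega^*,p^*,q}$, where $p^*$ satisfies $pp^*/((1-p)(1-p^*))=q$; the fixed point of the involution $p\mapsto p^*$ is exactly $p_{sd}(q)$. If an infinite primal cluster existed at some $p < p_{sd}$, duality would prevent an infinite dual cluster at $p^* > p_{sd}$, but Zhang's square-symmetry argument on $\Z^2$ (which uses only FKG and almost-sure uniqueness of the infinite cluster, both valid for $q\ge 1$) rules out such coexistence. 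This yields $p_c(q)\ge p_{sd}(q)$.

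For the reverse inequality I would invoke the parafermionic observable. In a Dobrushin domain $(\Omega,a,b)$ with mixed boundary conditions --- wired on the boundary arc from $a$ to $b$, free on its complement --- define the complex-valued observable $F$ on mid-edges by summing the phase $e^{-i\sigma W(e,\gamma)}$ over configurations in which the exploration interface $\gamma$ between the two arcs passes through the mid-edge $e$, each weighted by the Dobrushin measure; here $W(e,\gamma)$ is the winding of $\gamma$ up to $e$ and the fermionic spin $\sigma\in[0,1/2]$ is pinned by $q = 2 + 2\cos(2\pi\sigma)$. This choice is consistent precisely when $q\in[0,4]$, which is why the argument is restricted to this range. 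A case analysis of the six local configurations around any inner vertex $v$ shows that, and only at $p = p_{sd}$, a signed sum of the values of $F$ on the four mid-edges incident to $v$ vanishes --- the discrete divergence-free (parafermionic) identity.

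Summing this vertex identity across a long rectangle and telescoping yields a boundary identity relating $F$ on the wired arc, where $|F|$ essentially equals the probability that $\gamma$ uses the given mid-edge, to $F$ on the free arc, where phase cancellations are available. Extracting real parts produces a polynomial (rather than exponential) lower bound on the probability of crossing an $n\times n$ box at $p_{sd}$. By sharpness of the phase transition for $q\ge 1$ (Aizenman--Barsky--Menshikov as adapted to the random-cluster model by Duminil-Copin--Raoufi--Tassion), the regime $p < p_c$ is characterised by exponential decay of connection probabilities, so the polynomial lower bound at $p_{sd}$ forbids $p_{sd} < p_c$ and yields $p_c(q)\le p_{sd}(q)$. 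I expect the hardest step to be turning the complex-valued boundary identity into a real, usable crossing estimate: for $q\in(1,4)$ the spin $\sigma$ is strictly positive, so non-trivial phases arise on the free arc and the real and imaginary parts must be controlled together, while the endpoint $q=4$ where $\sigma=0$ makes $F$ formally real but degenerates the identity and is typically handled by a separate limiting argument.
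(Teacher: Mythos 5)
Your proposal is correct in outline, but for the only hard direction, $p_c\le p_{sd}$, it takes a genuinely different route from the paper. You combine two imported ingredients: (i) the parafermionic estimate at $p_{sd}$ in a rectangle/strip, yielding a polynomial lower bound on one-arm or crossing probabilities for $q\in[1,4]$ (essentially the key estimate of \cite{DCST17}, see also \cite{DC13}), and (ii) sharpness of the phase transition (exponential decay for $p<p_c$) from \cite{DCM14, DCRT17}. That combination is logically sound and not circular, but it is precisely the kind of argument this paper sets out to avoid: the paper's stated goal is a proof that does not use crossing probabilities and barely uses FKG. Instead, the paper feeds a Duminil-Copin--Tassion-type differential inequality (Lemmas \ref{q3dphidp} and \ref{q4dphidp}) whose driving quantity is an infimum over \emph{arbitrary} sets $S\ni 0$ of sums of connection probabilities in $S$ with free boundary conditions; the whole difficulty, and the novelty, is a uniform lower bound on this quantity in arbitrary domains, obtained via the observable in a slit domain with a reflection trick for $q\le 3$ (Lemma \ref{q3phiC}, where the sign condition $\cos(\tfrac{3\pi}{2}\hs)\le 1$ forces $q\le 3$), and via the truncated universal cover plus the strip two-case analysis (Lemmas \ref{q4phiCU}--\ref{twocases}) for $q\in(3,4]$. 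Two caveats on your side are worth recording. First, the step you yourself flag as hardest, extracting a real, usable bound from the complex boundary identity, is exactly the crux: in a rectangle or strip the boundary windings take only a bounded set of values, which is what saves the sign analysis up to $q=4$, whereas in arbitrary domains this fails, which is why the paper needs the slit and the universal cover; so your step (i) is a substantive theorem you are importing, not a routine telescoping. Second, once you invoke sharpness, the observable becomes redundant: if $p_{sd}<p_c$, exponential decay at $p_{sd}$ together with self-duality and a standard FKG/Peierls gluing already forces dual percolation at $p_{sd}$, a contradiction, which is exactly how \cite{DCM14, DCRT17} conclude $p_c=p_{sd}$. So your route buys brevity at the cost of resting on heavy external results, while the paper buys a self-contained argument (and potential applicability when crossing bounds or translation invariance are unavailable) at the cost of the universal-cover machinery. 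Finally, note that both your sketch and the paper's proof cover only $q\in[1,4]$; the statement for all $q\ge1$ is the theorem of \cite{BDC12}, with $q\ge4$ accessible by the observable as in \cite{BDCS15}, and your worry about a degenerate limiting argument at $q=4$ is an artifact of your normalisation of the spin rather than a genuine obstruction.
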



\par Both \cite{BDC12} and alternative proofs of Theorem \ref{Th0} given in $\cite{DCM14}$ and \cite{DCRT17}
use the crossing probabilities, i.e the probability of the rectangle of size $n \times \a n$ to be crossed in a horizontal direction. By generalisation of the Russo-Seymour-Welsh approach, this probability is bounded away from $0$ and $1$ independently on $n$. 
Also, both methods are strongly based on the FKG inequality, which bounds from below the probability for two different crossings to hold simultaneously by the product of probabilities for each of the crossings to hold independently of another one (see Section \ref{FKGdef}).


The approach used here
does not use crossing probabilities and barely uses the FKG inequality. 
It is based on the method of parafermionic observables (see Section \ref{paraf}). 
This approach was used in other lattice models to deduce the critical point
(for self-avoiding walks in \cite{DCS10} and \cite{Gl14}, for loop $O(n)$ models in \cite{DCGPS17}). 
For the random-cluster model, this method was used in the case $q \ge 4$ in \cite{BDCS15}. 
Also this method was applied in \cite{Lis17} in a general case of Ising model with no translational invariance.
Despite all this progress, the application of this method in the case of $q < 4$ remained open until now.

The key idea of the proof is to bound the derivative of the probability to have a path from the middle of the box to the boundary or between two fixed points on the opposite boundary of the strip.
That is why this approach can be used in the case where crossing probabilities cannot be bounded, or even when FKG does not hold.
Also we hope that this technique could be useful to study models without invariance under translation.

The paper is organised as follows.
Section \ref{Not} gives all definitions and statements required for the proof.
In Section \ref{Sketch3}, Theorem \ref{Th0} is proven in the simple case $q \in [1,3]$.
The idea of the proof for $q \in (3,4]$ is slightly similar to the previous one, but requires an additional construction. The strategy of the proof is given in Section \ref{Sketch4}, and the proof is completed in Sections \ref{Proofthlem}--\ref{twoc}.

\paragraph{Acknowledgments}  
The authors thank Hugo Duminil-Copin for suggesting the problem and many fruitful discussions.
The work was supported by the NCCR Swissmap.

\section{Definitions and basic properties} \label{Not}


\subsection{Domain Markov property and finite energy property}
The Domain Markov property states that the only influence that a configuration outside a graph $G$ has on a configuration inside $G$ is via induced boundary conditions.
More formally, suppose we have $G \subset G'$ and some boundary conditions $\xi'$ on $G'$ and let $\psi$ be a configuration on $G'\backslash G$. 
Let us define the following boundary conditions: two boundary clusters are counted as one if they are connected in $\psi$ (including the connections in $\xi'$). These boundary conditions are called {\it{induced}} by the configuration $\psi$ with boundary conditions $\xi'$ and denoted $(\psi, \xi')$. 
Then, for any configuration $\omega$ on $G'$ and its restriction to $G$ denoted $\omega_{|G}$, the following property holds:
\benn
\phi_{G', p,q}^{\xi'} (\omega\, |\, \omega_e = \psi_e  \text{ for all } e \in G'\backslash G) = \phi_{G, p,q}^{(\psi,\xi')} (\omega_{|G}).
\eenn

The finite energy property allows one to compare configurations which differ only on finitely many edges. 
For any $\e \in (0,\tfrac{1}{2})$ and for any $q>1$, there exists a constant $c = c(\e, q)>0$ such that, for every configuration $\omega$ and $\omega'$ differing only on $k$ edges 
\benn
c^{-k} \le \frac{\phi^\xi_{G,p,q}(\omega)}{\phi^\xi_{G,p,q}(\omega')}\le c^{k}
\eenn
for any $G$, any boundary conditions $\xi$, any fixed $q$ and any $p \in (\e, 1-\e)$.
\par Let us use the notation $\omega^e$ (resp. $\omega_e$) for the configuration obtained from the configuration $\omega$ by setting the edge $e$ to be open (resp. close).  
Then for a fixed value of $p$, this property can also be written in the following way:
\be \label{alltoclosed}
\frac{p}{q(1-p)} \le \frac{\phi_{G,p,q}^\xi(\omega^e)+\phi_{G,p,q}^\xi(\omega_e)}{\phi_{G,p,q}^\xi(\omega_e)} \le \frac{p}{(1-p)}
\ee
for any $G, p, \xi$ and any $q \ge 1$.

\subsection{Increasing events} \label{FKGdef}

\par The event $A$ is called {\it{increasing}} if for any edge $e$ and any configuration $\omega$ the fact that $A$ holds for $\omega_e$ implies that $A$ holds for $\omega^e$. 
If $A$ and $B$ are two increasing events, then the following inequality, called FKG inequality \cite{FKG71}, holds for any values of $p\in [0,1]$, $ q \ge 1$ and for any graph $G$ and boundary condition $\xi$: 
\be
\phi^\xi_{G, p, q} (A \cap  B) \ge \phi^\xi_{G, p, q} (A) \phi^\xi_{G, p, q} (B).
\ee

\par Let us now turn to the comparison between different boundary conditions. Write $\xi \ge \chi$ if any two boundary points seen as parts of one cluster in $\chi$ have the same property in $\xi$. Then, for any increasing $A$ and any $G, p$ and $q \ge1$,
\be \label{boundcond}
\phi^\xi_{G, p,q} (A)\ge \phi^\chi_{G, p,q}(A).
\ee
This inequality, together with the definition of induced boundary conditions, can be used to compare the measures in different domains. For example, if $G \subset G'$ and $A$ is defined on $G$, then
\be \label{boundcond1}
\phi^1_{G,p,q} (A) \ge \phi^\xi_{G',p,q} (A) \ge\phi^0_{G,p,q} (A)
\ee
for any $p, q \ge1$ and any $\xi$.
\par The edge $e$ is called {\it{pivotal}} for the event $A$ in a given configuration $\omega$ if $A$ holds for $\omega^e$ and does not hold for $\omega_e$. Note that the fact that the edge is pivotal does not depend on the state of this edge in $\omega$.
The following inequality
holds for any increasing event $A$, boundary conditions $\xi$ and graph $G$ \cite{Gr06}: 
\be \label{difineq}
\frac{d}{dp} \phi^\xi_{G,p,q}(A) \ge c \sum_{e \in E(G)} \phi^\xi_{G,p,q}( e \text{ is pivotal for } A),
\ee
where $c$ does not depend on $G$, $A$ or $p$. 

\subsection{Duality} \label{dual}

Let us define the {\it{dual lattice}} $L^*$ of a planar lattice $L$ as follows.
The set of vertices of $L^*$ corresponds to the set $F(L)$ of faces of $L$ including the infinite face. In words, we put a vertex of the dual lattice in the middle of each face of $L$. 
Then, we connect all pairs of vertices corresponding to adjacent faces. 
Each edge of the dual lattice intersects exactly one edge of the primal lattice. 
\par For a finite graph $G \subset L$, let $G^*$ denote the subgraph of $L^*$ with edges corresponding to edges of $G$ and vertices to the endpoints of these edges.
\par The configuration $\omega^*$ on $G^*$ {\it{dual}} to a configuration $\omega$ on $G$ is defined as follows:
\benn
\omega^*(e^*) = 1 - \omega(e),
\eenn
where $e$ is the edge of the primal lattice intersecting $e^*$. The probability of $\omega^*$ is set to be the probability of $\omega$.
The event that $x$ is connected to $y$ by a dual-open path is denoted by $x \overset{*}{\arr} y$.

There exists a value $p^* = p^*(p,q)  \in [0,1]$ such that the following relation is true \cite{Gr06, DC13}:
\be
\phi^\xi_{G, p,q} (\omega) = \phi^{\xi^*}_{G^*, p^*,q}(\omega^*). 
\ee 
Here, $\xi^*$ denotes boundary conditions dual to $\xi$, for example wired boundary conditions are dual to free boundary conditions and vice versa.
The value of $p^*$ is related to $p$ via the following equation:
\be
\frac{p p^*}{(1-p)(1-p^*)} = q.
\ee
As mentioned above, there exists the unique point $p_{sd}$ such that $p_{sd} = (p_{sd})^*$. Its value is equal to 
\benn
p_{sd} = \frac{\sqrt{q}}{1+\sqrt{q}}.
\eenn

\subsection{Parafermionic observable} \label{paraf}
\par The {\it{medial lattice}} of a lattice $L$ is denoted by $L^\diamond$ and defined as follows.
The set $V(L^\diamond)$ corresponds to $E(L)$ (that can be interpreted as if we put a vertex in the middle of every edge of $L$). 
There is an edge between two vertices of $L^\diamond$ if the corresponding edges have one common end vertex and are adjacent to the same face.
Note that the faces of $L^\diamond$ correspond to $V(L)\cup V(L^*)$. We orient the edges of $L^\diamond$ counterclockwise around faces corresponding to vertices of $L$ (and, thus, clockwise around faces corresponding to vertices of $L^*$).
Note that the lattice $(\Z^2)^\diamond$ is a rescaled and rotated version of $\Z^2$.
 
\par The {\it{medial graph}} $G^\diamond$ of a finite graph $G \subset L$ is the subgraph of $L^\diamond$ made of all faces corresponding to the vertices of $G$ and $G^*$, all edges surrounding these faces and all vertices incident to these edges.

\par Let $\gamma$ be a path on $G^\diamond$ and let $e$ and $e'$ be two edges belonging to $\gamma$. Then, the {\it{winding}} $W_\gamma(e,e')$ is the total rotation done by $\gamma$ on the way from the middle of $e$ to the middle of $e'$. If $e$ or $e'$ do not belong to $\gamma$, we set $W_\gamma(e,e')=0$. 
\par Let us pick two vertices $a$ and $b$ on the boundary of $G$ and define {\it{Dobrushin boundary conditions}}, denoted $(a,b)$, as wired on the boundary arc from $a$ to $b$ clockwise and free for the rest of $\d G$. Let us note that for dual configuration $(a,b)^* = (b,a)$.
There is a primal cluster attached to the wired boundary part and a dual cluster attached to the free boundary part (in the dual configuration, it corresponds to a wired arc). The curve on the medial lattice going between these two clusters is called an {\it{exploration path}} (for more details, see \cite{DC13}). Note that it is oriented. Let us also add two edges $e_a$ and $e_b$ to begin and to end $\gamma$ outside of $G$, these edges are oriented according to the orientation rules stated above.

\par We follow \cite{Sm10} and \cite{DCST17} to define the {\it{parafermionic observable}}:
\benn
\hat{F}(e) = \hat{F}_{G,a,b,q,p_{sd}}(e) = \phi_{G,p_{sd},q}^{(a,b)}(e^{i \hs W_\gamma(e,e_b)} \one_{e \in \gamma}) \quad \forall e \in E(G^\diamond),
\eenn
where $\gamma$ is the exploration path in $\omega$ and $\hs$ satisfies the following relation:
\benn
\cos(\tfrac{\hs \pi}{2}) = \tfrac{\sqrt{q}}{2}.
\eenn

\par Let us take any set $V \in V(G^\diamond)$ such that any vertex $v \in V$ has four incident edges in $E(G^\diamond) \cup \{e_a,e_b\}$ and define its outer boundary 
\be \label{boundedge}
\delta V = \{e \in E(G^\diamond) \cup \{e_a,e_b\}, \text{only one end of $e$ belongs to V}\}.
\ee
For edge $e \in \delta V$, define the function $\eta_V(e)$ to be equal to $1$ if $e$ is pointing to a vertex of $V$ and $-1$ otherwise. 
Then, the following property holds \cite{DCST17}:

\be \label{parobssum}
\sum_{e \in \delta V} \eta_V(e) \hat{F}(e) =0.
\ee

\subsection{Graphs used in the proof}

\par In this paper, we work on $\Z^2$ but also on other infinite graphs. The half-plane $\{(x,y) \in \Z^2, y \ge 0\}$ is denoted by $\H$ (its boundary $\d \H$ is equal to $\{ (x,0) \in \Z^2\}$).

\par We can define the {\it{box}} of size $n$ in $\Z^2$ or in $\H$ as follows:
\benn
\L_n = \{(x,y) \in \Z^2 \text{(or $\H$)}: \max (|x|,|y|) \le n\},
\eenn
its boundary is naturally defined as
\benn
\d \L_n = \{(x,y) \in \Z^2 \text{(or $\H$)}: \max (|x|,|y|) = n\}.
\eenn

\par Free and wired boundary conditions are defined for $\H$ in the same way, as for $\Z^2$, as a limit of measures defined for boxes $\L_n \subset \H$ with free (resp. wired) boundary conditions on $\d \L_n$ and on $\d \H$. 
Also we work with $0\backslash 1$ boundary conditions, obtained by taking a limiting measure for boxes $\L_n$ with Dobrushin $((0,n),(0,0))$ boundary conditions (see Section \ref{paraf}).

\par The {\it{strip}} of height $n$ in $\Z^2$ (which can also be seen as a strip in $\H$) is defined as:
\benn \label{strip}
S_n = \{ (x,y) \in \Z^2: y\in [0,n] \}.
\eenn
Let us call the left and the right parts of its boundary the subsets defined as:
\begin{align}
\d^+S_n = \{ (x,y) \in \Z^2: y\in \{0,n\}, x \ge 0 \},\nonumber\\
\d^-S_n = \{ (x,y) \in \Z^2: y\in \{0,n\}, x \le 0 \}. \nonumber
\end{align}
The bottom left part of $\d S_n$ will be denoted by
\benn
\d_b^-S_n = \{ (x,0) \in \Z^2: x <0 \}.
\eenn

Free, wired or $0 \backslash 1$ boundary condition measures on $S_n$ are defined as a limit of measures on  rectangles 
$\{(x,y) \in \Z^2: x \in [-m,m], y \in [0,n]\}$ with free, wired or $((0,n),(0,0))$ boundary conditions respectively.

\par For some lemmas, we work on the universal cover of $\Z^2 \backslash F_0$ where $F_0$ is the face $[0,1]\times[-1,0]$. 
 The {\it{universal cover}} $\U$ is a graph defined as follows:
\begin{align}
V(\U) = \{& (x,y,z), x,y,z \in \Z\}= V(\Z^3),  \nonumber \\
E(\U) = \{&((x,y,z),(x,y+1,z)) \, \forall x,y,z \in \Z\} \nonumber \\
		\cup \,\{&((x,y,z),(x+1,y,z)) \, \forall x,y,z \in \Z \text{ such that } x \neq 0\}  \nonumber \\
		\cup \,\{& ((0,y,z),(1,y,z))\, \forall y,z \in \Z \text{ such that } y \ge 0 \}  \nonumber\\
		\cup \, \{& ((0,y,z),(1,y,z+1))\,\forall y,z \in \Z \text{ such that } y\le0\}. \nonumber
\end{align}
We also introduce the {\it{truncated universal cover}} $\U_k$ to be the subgraph of $\U$ with vertices of the type $(x,y,z)$ with $|z|\le k$. 
Let us define its boundary $\d \U_k$ as 
$\{(0,-y,-k), y \in \N\} \cup \{(0,-y,k), y \in \N\} $.

We can define the analogue of a box $\L_n$ in $\U_k$ as follows:
\begin{align}
\L_{n,k} &= \{(x,y,z) \in \U_k : \max (|x|,|y|) \le n\}, \nonumber \\
\d \L_{n,k} &= \{(x,y,z) \in \U_k: \max (|x|,|y|) = n\}. \nonumber
\end{align}

Free and wired boundary conditions are defined for $\U_k$ in the same way as for $\Z^2$. 
Let us also note that we can extend the notions of dual and medial graphs to $\U_k$.

 
\section{Proof of Theorem \ref{Th0} for $1 \le q \le 3$.} \label{Sketch3}

The proof is simpler when $q \le3$. We therefore begin by discussing this case.
Let us first notice that, by Zhang argument (see \cite{Gr06}), 
\be \label{Zhang}
p_c \ge p_{sd}.
\ee
We therefore have to show only the inequality $p_c \le p_{sd}$.


Let us take a set $S$ such that  $S \subset \L_n \subset \Z^2$ and $0 \in S$ and define its edge boundary $\Delta S = \{(x,y) \in E(\L_n): x \in S, y \not\in S\}$. Then, define the following auxiliary function:
\be \label{phi3}
\varphi_{p,q,n}(S)=\sum_{(x,y) \in\Delta S}{\phi_{S,p,q}^0[0 \arr x]}.
\ee 
The result is the consequence of the following two lemmas.

\begin {lemma} \label{q3dphidp}
Let $p \ge p_{sd}$, $q>1$ and $n \ge 1$. Then, for any $G$ such that $\L_n \subset G \subset \Z^2$, we have that
\be  \label{q3dphidpeq}
\frac{d}{dp} \phi_{G,p,q}^\xi[0\arr \d \L_n] \ge 
\frac{ c }{1-p} \left(\inf_{S:\, 0 \in S \subset \L_n}\varphi_{p,q,n}(S)\right) (1-\phi_{G,p,q}^\xi[0\arr \d \L_n])
\ee
where $c$ does not depend on $p, G$ or $\xi$.
\end{lemma}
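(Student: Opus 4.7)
The plan is to combine Russo's formula with the domain Markov property: the derivative of $\phi^\xi[0 \arr \d\L_n]$ is bounded below by a sum of closed-pivotal-edge probabilities; this sum is decomposed by conditioning on the cluster $C(0)$ of the origin; and the resulting ``outside'' connection probabilities are then compared to the inside quantity $\varphi_{p,q,n}(S)$ using planar duality at $p_{sd}$ together with $p \ge p_{sd}$.

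First I would strengthen \eqref{difineq} to the form
\benn
\frac{d}{dp} \phi^\xi_{G,p,q}[A] \ge \frac{c}{1-p} \sum_{e} \phi^\xi[e \text{ closed and pivotal for } A],
\eenn
where $A = \{0 \arr \d\L_n\}$ and the extra $1/(1-p)$ comes from the standard Russo-type identity for the random-cluster model, applied to closed-pivotal rather than merely pivotal edges. Next I would decompose the pivotal sum by conditioning on the cluster of $0$. On $A^c$, this cluster is a finite set $S$ with $0 \in S \subset \L_n \setminus \d\L_n$, and an edge $(x,y)$ is closed and pivotal for $A$ if and only if $(x,y) \in \Delta S$ and $y \arr \d\L_n$ in $G \setminus S$. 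By the domain Markov property, conditional on $\{C(0) = S\}$ the configuration on the edges with both endpoints in $\bar S := V(G) \setminus S$ has law $\phi^\xi_{\bar S, p, q}$, and hence
\benn
\sum_e \phi^\xi[e \text{ closed and pivotal}] = \sum_{S} \phi^\xi[C(0)=S] \sum_{(x,y)\in\Delta S} \phi^\xi_{\bar S,p,q}[y \arr \d\L_n].
\eenn

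The combinatorial heart of the argument is then to show, for each admissible $S$, the inequality
\benn
\sum_{(x,y)\in\Delta S} \phi^\xi_{\bar S,p,q}[y \arr \d\L_n] \ge c' \varphi_{p,q,n}(S).
\eenn
For this I would first lower-bound the outside measure via the boundary-condition monotonicity \eqref{boundcond1}, replacing $\xi$ on $\bar S$ by free boundary conditions on $\L_n \setminus S$, and then invoke planar duality at $p_{sd}$: at the self-dual point the dual of the free random-cluster measure on $\L_n \setminus S$ is (after the primal-dual swap) a wired measure of the same type, under which the event $\{y \arr \d\L_n\}$ is dual to a disconnection event bounded from below by $\phi^0_{S,p,q}[0 \arr x]$ at the matched boundary edge; since $p \ge p_{sd}$, the outside probability only increases, so the inequality extends from $p_{sd}$ to all $p \ge p_{sd}$. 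Summing over $S$ and using $\varphi(S) \ge \inf_{S'} \varphi(S')$ gives \eqref{q3dphidpeq}.

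I expect the main obstacle to be precisely the outside-to-inside comparison in the displayed inequality above. The outside probability lives on $\bar S$ and the inside one on $S$, and while the two are anchored at matched endpoints of $\Delta S$, identifying them term-by-term requires a careful use of self-duality at $p_{sd}$ that respects the structure of the sum over $\Delta S$. This is the step where the hypothesis $p \ge p_{sd}$ enters crucially, and where the main subtlety of the proof is expected to concentrate.
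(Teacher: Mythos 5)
Your overall skeleton (Russo-type inequality, decomposition over a random set, Domain Markov) matches the paper's, but the object you condition on is wrong, and this is not a cosmetic difference: it inverts which half of the pivotality event is left to estimate and lands you on an inequality that is actually false.

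You decompose over the cluster $C(0)=S$. Given $C(0)=S$, the edge $(x,y)\in\Delta S$ is closed and pivotal iff $y\arr\d\L_n$; the inside connection $0\arr x$ is automatic. So the quantity you still have to control is the \emph{outside} connection probability $\phi^\xi_{\bar S,p,q}[y\arr\d\L_n]$, and you then propose to show $\sum_{(x,y)\in\Delta S}\phi^\xi_{\bar S,p,q}[y\arr\d\L_n]\ge c'\varphi_{p,q,n}(S)$ term-by-term via duality. This intermediate inequality cannot hold for all admissible $S,G,\xi,n$. Take $S=\{0\}$, $G=\Z^2$, $\xi$ free, $p=p_{sd}$. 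Then $\varphi_{p,q,n}(\{0\})=\sum_{(0,y)\in\Delta\{0\}}\phi^0_{\{0\}}[0\arr 0]=4$, a constant, whereas $\sum_{y\sim 0}\phi^0_{\Z^2\setminus\{0\},p_{sd},q}[y\arr\d\L_n]\to 0$ as $n\to\infty$, since $p_{sd}\le p_c$ (by Zhang's argument, which the paper invokes first) implies the free measure has no infinite cluster at $p_{sd}$. No self-duality trick can produce a constant lower bound out of a one-arm probability that decays with $n$. The decomposition identity you wrote is correct as an identity, but it no longer factorizes nicely, and the term-by-term reduction to $\varphi(S)$ is the step that fails.

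The paper conditions instead on $\mS=S$, where $\mS=\{x\in\L_n:x\nleftrightarrow\d\L_n\}$. Given $\mS=S$, for $(x,y)\in\Delta S$ the outside condition $y\arr\d\L_n$ is automatic (since $y\notin\mS$) and all of $\Delta S$ is closed, so pivotality reduces to the \emph{inside} event $\{0\underset{S}\arr x\}$; moreover the induced boundary conditions on $S$ are exactly free, so by the Domain Markov property
\benn
\phi^\xi_{G,p,q}\bigl[\,0\underset{S}\arr x\bigm|\mS=S\,\bigr]=\phi^0_{S,p,q}[0\arr x],
\eenn
and the pivotal sum factorizes as $\sum_S \varphi_{p,q,n}(S)\,\phi^\xi_{G,p,q}[\mS=S]\ge\bigl(\inf_S\varphi\bigr)\,\phi^\xi_{G,p,q}[A^c]$. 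Note that this argument uses neither duality nor $p\ge p_{sd}$; those hypotheses only enter later, when the paper bounds $\varphi_{p_{sd},q,n}(S)$ from below (Lemma~\ref{q3phiC}) and extends the bound to $p\ge p_{sd}$ by monotonicity. Your instinct that the outside-to-inside comparison is the ``main subtlety'' was a correct diagnosis, but the resolution is to change the conditioning so that no such comparison is needed.
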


\begin{proof} 
In this proof, we follow  \cite{DCT16, DCT216}.
The differential inequality \eqref{q3dphidpeq} is a consequence of \eqref{alltoclosed}, \eqref{difineq} and the characterisation of pivotal edges:
\be \nonumber
\frac{d}{dp} \phi_{G,p,q}^\xi[0\arr \d \L_n] \ge \frac{c}{1-p}\!\! \sum_{e\in E(G)} \! \phi_{G,p,q}^\xi[e \text{ is pivotal for } \{0\arr \d \L_n\} \text{ and } \{0 \nleftrightarrow \d \L_n\}].
\ee
Let us call $\mS$ the set $\{x \in \L_n: x\nleftrightarrow \d \L_n \}$. The event  $\{0 \nleftrightarrow \d \L_n\}$ can be rewritten as $\{0 \in \mathcal{S} \}$. Note that, when the event $\{0\arr \d \L_n\}$ does not occur, all the pivotal edges are closed and lie in $\Delta \mS $. Then, 
\begin{align}
\frac{d}{dp} \phi_{G,p,q}^\xi[0\arr \d \L_n] & \ge \frac{c}{1-p} \sum_{S:\, 0 \in S \subset \L_n}  \sum_{e\in E(G)} \! 
\phi_{G,p,q}^\xi[e \text{ is pivotal for } \{0\arr \d \L_n\}, \mS = S] \nonumber \\
& \ge \frac{c}{1-p} \sum_{S:\, 0 \in S \subset \L_n}  \sum_{x: (x,y) \in \Delta \mS} \! 
\phi_{G,p,q}^\xi[ \{0\underset{S}\arr x \}, \mS = S] \nonumber \\
& \ge \frac{c}{1-p} \sum_{S:\, 0 \in S \subset \L_n}  \! \varphi_{p,q,n}(S) \phi_{G,p,q}^\xi[ \mS = S] \nonumber 
\end{align}
where the last inequality follows from the Domain Markov property. 
Then, we can conclude that
\begin{align}
 \frac{d}{dp} \phi_{G,p,q}^\xi[0\arr \d \L_n] 
& \ge \frac{c}{1-p} \left( \inf_{S:\, 0 \in S \subset \L_n}  \varphi_{p,q,n}(S) \right) \phi_{G,p,q} ^\xi[ 0 \in \mS] \nonumber \\ 
&= \frac{c}{1-p} \left( \inf_{S:\, 0 \in S \subset \L_n}  \varphi_{p,q,n}(S) \right) (1-  \phi_{G,p,q}^\xi[0\arr \d \L_n]). \nonumber
\end{align}

\end{proof}

\begin{lemma} \label{q3phiC}
There exists $C>0$ such that for any $n \ge 1$ and $S$ such that $0 \in S \subset \L_n$,
\be 
\varphi_{p_{sd},q,n}(S) > C.
\ee
\end{lemma}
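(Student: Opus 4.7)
The plan is to derive the bound by applying the parafermionic observable identity \eqref{parobssum} to a modification of $S$ carrying Dobrushin boundary conditions whose wired arc collapses to the single vertex $0$. Since the observable is only defined under Dobrushin boundary conditions, I would first open a small slit at the face $F_0=[0,1]\times[-1,0]$ adjacent to $0$, splitting $0$ into two boundary vertices $a,b$; equivalently, one lifts $S$ to appropriate sheets of the truncated universal cover $\U_k$ introduced in Section \ref{Not}. With Dobrushin $(a,b)$ boundary conditions on the resulting domain $\tilde S$, the wired arc consists of (the two copies of) $0$, and the induced measure $\phi^{(a,b)}_{\tilde S,p_{sd},q}$ agrees with $\phi^0_{S,p_{sd},q}$ on every event supported on edges of $S$.

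Next, let $V$ be the set of interior medial vertices of $\tilde S^\diamond$, so that $\delta V=\{e_a,e_b\}\sqcup \delta^{\mathrm{out}}V$, where every $e\in \delta^{\mathrm{out}}V$ crosses some primal edge $(x,y)\in\Delta S$ (with $x\in S$, $y\notin S$). The identity \eqref{parobssum} then gives
\[
\eta_V(e_a)\hat F(e_a)+\eta_V(e_b)\hat F(e_b)= -\sum_{e\in \delta^{\mathrm{out}} V}\eta_V(e)\hat F(e).
\]
Since $e_a,e_b\in\gamma$ deterministically and the lift forces $W_\gamma(e_a,e_b)$ to equal the prescribed value $\pm 2\pi$ (one full turn around $0$, required to pass from the sheet of $a$ to that of $b$), both $\hat F(e_a),\hat F(e_b)$ are explicit pure phases and the left-hand side has modulus $|1-e^{\pm 2\pi i\hs}|=2|\sin(\pi\hs)|$; for $q\in[1,3]$ we have $\hs\in[1/3,2/3]$, so this modulus is at least $\sqrt 3$. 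On the other hand, for each $e\in \delta^{\mathrm{out}} V$ crossing $(x,y)\in\Delta S$,
\[
|\hat F(e)|\le \phi^{(a,b)}_{\tilde S,p_{sd},q}[e\in\gamma]\le \phi^{(a,b)}_{\tilde S,p_{sd},q}[0\arr x]=\phi^0_{S,p_{sd},q}[0\arr x],
\]
because $\gamma$ reaching $e$ forces the vertex $x$ to lie on the outer boundary of the primal cluster attached to the wired arc $\{0\}$, hence connected to $0$. Combining via the triangle inequality yields
\[
\sqrt 3 \le \sum_{e\in \delta^{\mathrm{out}} V}|\hat F(e)| \le \sum_{(x,y)\in\Delta S}\phi^0_{S,p_{sd},q}[0\arr x]=\varphi_{p_{sd},q,n}(S),
\]
which proves the lemma (with, say, $C=1$).

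The delicate step is the source-term analysis: choosing the slit (equivalently, the sheets of the lift to $\U_k$) so that $W_\gamma(e_a,e_b)$ really is deterministic, and verifying that $|\eta_V(e_a)\hat F(e_a)+\eta_V(e_b)\hat F(e_b)|$ admits a uniform positive lower bound. The bound $2|\sin(\pi\hs)|\ge\sqrt 3$ degenerates as $\hs\to 0$, i.e.\ as $q\to 4$, which is exactly why the case $q\in(3,4]$ requires the additional construction of Sections \ref{Sketch4}--\ref{twoc}.
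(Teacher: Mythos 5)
There is a genuine gap, and it is precisely at what you flag as the ``delicate step.''

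Opening a slit (whether you remove a ray of vertices as the paper does, or lift to the universal cover) is indeed what the paper does to make sense of $a=b=0$ and obtain a deterministic winding for the source. But the slit also creates additional medial edges in $\delta V$ that do \emph{not} cross edges of $\Delta S$. Concretely, if $S$ contains a vertex $(0,k)$ with $k\ge 1$, the slit removes it from the domain, and the medial edges of $\delta V$ adjacent to the slit cross the primal edges $\bigl((\pm 1,k),(0,k)\bigr)$, which are not in $\Delta S$ since both endpoints may lie in $S$. Your claimed decomposition $\delta V=\{e_a,e_b\}\sqcup\delta^{\mathrm{out}}V$ with ``every $e\in\delta^{\mathrm{out}}V$ crosses some $(x,y)\in\Delta S$'' is therefore false, and the resulting triangle-inequality bound reads
\benn
|e^{i\hs W}-1|\ \le\ \sum_{e\ \mathrm{crossing}\ \Delta S}|\hat F(e)|\ +\ \sum_{e\ \mathrm{along\ the\ slit}}|\hat F(e)|,
\eenn
which gives no lower bound on $\varphi_{p_{sd},q,n}(S)$ because the second sum is not controlled.

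The paper's proof devotes the majority of its work to exactly these slit edges: it keeps $\delta_1\cup\delta_0$ together on one side of the inequality, pairs the domain with its reflection $\overline S$ across the $y$-axis, and computes that the combined slit contribution carries a factor $\tfrac{\sqrt q}{2}(3-q)\cos$-type phase which has the \emph{same sign} as the source term for $q\le 3$. Only then can the slit terms be dropped (they only strengthen the lower bound), leaving the constant $C=\tfrac12|e^{i\hs\pi/2}-1|(1+\sqrt q)$. This sign argument is where the hypothesis $q\le 3$ enters; it is not, as you suggest, because $2|\sin(\pi\hs)|$ becomes small. Indeed $\hs\ge\tfrac13$ for all $q\le 3$, so your source modulus stays bounded away from $0$; the obstruction at $q=3$ is the change of sign of the slit contribution, not a degeneration of the source. (As a minor point, with the paper's choice of $e_a,e_b$ the deterministic winding is $3\pi/2$, not $\pm2\pi$, but that would only change the numerical constant.) Without an argument controlling the slit sum -- the reflection-and-phase computation in the paper, or an equivalent -- the proof does not close.
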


\begin{proof}
Consider a set $S$ such that $0 \in S \subset \L_n$.

We can define the graph $\L_n '$ as a box $\L_n$ with vertices $\{(0,k), 1 \le k \le n\}$ removed. Then, let us denote the connected component of $S$ that contains $0$  as $S_0$ and work with the domain $S_0' = S_0 \cap \L_n'$.
Let us call $(0)$ the boundary conditions that are free everywhere. 
These boundary conditions can be seen as Dobrushin boundary conditions with the wired arc collapsed to one point (i.e. $a=b=0$). 
This observation allows to define the exploration path $\gamma$ for a configuration in this domain.
Its beginning edge $e_a = \bigl( (0,\tfrac{1}{2}),(-\tfrac{1}{2},0)\bigr)$ is adjacent to the edge $e_b = \bigl( (\tfrac{1}{2},0),(0,\tfrac{1}{2})\bigr)$,
so $\gamma$ forms a loop around $0$, which bounds the open cluster in $S_0'$ that contains $0$.

\par Let us call $V$ the set of all vertices in $V((S_0')^\diamond)$ that have four incident edges in $E((S_0')^\diamond) \cup \{e_a,e_b\}$, then, $\delta V $ defined as in \eqref{boundedge} can be split into three parts: first, $\delta_0 = \{e_a,e_b\}$, second, the edges adjacent to the slit from the right and from the left (it can be written as $\delta_1 = \delta_1^+ \cup \delta_1^-$, where 
edges in $\delta_1^+$ are of the form $\bigl( (\tfrac{1}{2},k),(1,k+\tfrac{1}{2})\bigr)$ or $ \bigl( (\tfrac{1}{2},k+1),(1,k+\tfrac{1}{2})\bigr)$
and edges in $\delta_1^-$ are of the form $\bigl( (-\tfrac{1}{2},k),(-1,k+\tfrac{1}{2})\bigr)$ or $\bigl( (-\tfrac{1}{2},k+1),(-1,k+\tfrac{1}{2})\bigr)$),
third, other edges neighbouring the boundary of $(S_0')^\diamond$.

\par Let us also look at the domain $\overline S$ defined as the reflection of $S$ with respect to the $y$-axis. We can define $\overline {S'_0}$, $\overline V, \overline {\delta V},\overline {\delta_1^+}, \overline {\delta_1^-}$ and $\overline {\delta_2}$ in the same way as for $S$.

\par Then, \eqref{parobssum} can be written as
\benn
\sum_{e \in \delta_2} \eta_V(e) \hat{F}_{S'_0}(e)+
 \sum_{e \in \overline{\delta_2}} \eta_{\overline V}(e) \hat{F}_{\overline {S'_0}}(e)  +
 \sum_{e \in \delta_1 \cup \delta_0} \eta_V(e) \hat{F}_{S'_0}(e)
 +  \sum_{e \in \overline{\delta_2} \cup \delta_0} \eta_{\overline V}(e) \hat{F}_{\overline {S'_0}}(e)
 =0,
\eenn
or
\be \label{lemq3par}
 \bigl| \sum_{e \in \delta_1 \cup \delta_0} \eta_V(e) \hat{F}_{S'_0}(e)
 +  \sum_{e \in \overline{\delta_2} \cup \delta_0} \eta_{\overline V}(e) \hat{F}_{\overline {S'_0}}(e) \bigr| 
 \le \bigl|\sum_{e \in \delta_2} \eta_V(e) \hat{F}_{S'_0}(e)  \bigr| + 
\bigl| \sum_{e \in \overline{\delta_2}} \eta_{\overline V}(e) \hat{F}_{\overline {S'_0}}(e)  \bigr|.
\ee

The first term in the right part of  \eqref{lemq3par} is bounded as follows:
\begin{align}
\bigl|\sum_{e \in \delta_2} \eta_V(e) \hat{F}_{S'_0}(e) \bigr| 
&\le \sum_{e \in \delta_2} |\hat{F}_{S'_0}(e) | =
\sum_{e \in \delta_2} |\phi_{S'_0,p,q}^{(0)}(e^{i \sigma W_\gamma(e,e_b)} \one_{e \in \gamma})| \nonumber \\
&= \sum_{e \in \delta_2} \phi_{S'_0,p,q}^{(0)}(e \in \gamma)
= 2 \sum_{x \in \d S_0} \phi_{S',p,q}^0 (0 \arr x), \nonumber
\end{align}
because any boundary vertex corresponds to two edges from $\delta_2$ that do or do not belong to $\gamma$ simultaneously. The second term is bounded by the same value because of the symmetry between $S$ and $S'$:
\benn
\bigl|\sum_{e \in \delta_2} \eta_{\overline V}(e) \hat{F}_{\overline{S'_0}}(e) \bigr| 
\le 2 \sum_{x \in \d \overline S_0} \phi_{\overline{S_0'},p,q}^0 (0 \arr x)
\le 2 \sum_{x \in \d S_0} \phi_{S',p,q}^0 (0 \arr x).
\eenn
Together, this gives the following bound on the right part of  \eqref{lemq3par}:
\be \label{leftC3}
\bigl|\sum_{e \in \delta_2} \eta_V(e) \hat{F}_{S'_0}(e)  \bigr| \!+ \!\bigl|
  \sum_{e \in \overline{\delta_2}} \eta_{\overline V}(e) \hat{F}_{\overline {S_0'}}(e)  \bigr|
\le 4 \!\!\! \sum_{x \in \d S_0} \!\!\phi_{S',p,q}^0 (0 \arr x) 
\le 4 \!\!\! \sum_{x \in \d S_0}\!\! \phi_{S,p,q}^0 (0 \arr x).
\ee

\par 
Before writing the inequality for the right part of \eqref{lemq3par}, let us define the vertices of $S'_0$ and $\overline{S'_0}$ adjacent to the slit:
\begin{align}
\d^+_{slit} = \{ (1,k), 1 \le k \le n \}\cup S_0, \quad \overline{\d^+_{slit}}= \{ (1,k), 1 \le k \le n \}\cup \overline{S_0},\nonumber \\
\d^-_{slit} = \{  (1,k), 1 \le k \le n  \}\cup S_0, \quad \overline{\d^-_{slit}} = \{  (1,k), 1 \le k \le n  \}\cup \overline{S_0}.\nonumber
\end{align}

Each of these points corresponds to two edges of $\delta_1$ or $\overline{\delta_1}$.
Then, the sums in the left part of  \eqref{lemq3par} are written as follows:
\begin{align}
\sum_{e \in \delta_1 \cup \delta_0} \eta_V(e) \hat{F}_{S_0'}(e) = 
(e^{i \hs \frac{3 \pi}{2}} -1) + 
\sum_{x \in \d^+_{slit}}\phi^0_{S'_0, p_{sd},q}(x \arr 0) (e^{-i \hs \pi}-e^{-i\hs \frac{\pi}{2}}) \nonumber \\
+ \sum_{x \in \d^-_{slit}}\phi^0_{S'_0, p_{sd} ,q}(x \arr 0) (e^{2 \pi i \hs}-e^{i\hs \frac{5\pi}{2}}),
\nonumber \\ \nonumber
\sum_{e \in \overline{\delta_2} \cup \delta_0} \eta_{\overline V}(e) \hat{F}_{\overline {S'_0}}(e) = 
(e^{i \hs \frac{3 \pi}{2}} -1) + 
\sum_{x \in  \overline{\d^+_{slit}}} \phi^0_{\overline{S'_0}, p_{sd}, q}(x \arr 0) (e^{-i \hs \pi}-e^{-i\hs \frac{\pi}{2}}) \nonumber \\
+ \sum_{x \in \overline{\d^-_{slit}}} \phi^0_{\overline{S'_0}, p_{sd} ,q}(x \arr 0) (e^{2 \pi i \hs}-e^{i\hs \frac{5\pi}{2}}). \nonumber
\end{align}

Because of the symmetry of $S'_0$ and $\overline{S'_0}$, these equations can be summed to give
\begin{align}
\bigl|
\sum_{e \in \delta_1 \cup \delta_0} \eta_V(e) \hat{F}_{S'_0}(e) 
&+ \sum_{e \in \overline{\delta_2} \cup \delta_0} \eta_{\overline V}(e) \hat{F}_{\overline {S'_0}}(e)
\bigr|
= \bigl| 2(e^{i \hs \frac{3 \pi}{2}} -1) \nonumber \\ \nonumber
&+ \sum_{x \in \d^+_{slit}\cup \d^-_{slit}}\phi^0_{S'_0, p_{sd}, q}(x \arr 0) (e^{-\pi i \hs}-e^{-i\hs \frac{\pi}{2}} + e^{2 \pi i \hs}-e^{i\hs \frac{5\pi}{2}}) 
\bigr|
\\ \nonumber
&= \bigl| 2(e^{i \hs \frac{3 \pi}{2}} -1) + 
\sum_{x \in \d^+_{slit}\cup \d^-_{slit}} 2 \phi^0_{S'_0, p_{sd}, q}(x \arr 0) (e^{i \hs\frac{\pi}{2}} - e^{\pi i \hs}) \cos(\hs \tfrac{3\pi}{2})
\bigr|
\\ \nonumber
& =2 \bigl|  e^{\pi i \hs} -e^{i \hs \frac{\pi}{2}} \bigr|\cdot \bigl| \bigl(
(e^{i \hs \frac{\pi}{2}} +1+ e^{-i \hs \frac{\pi}{2}} ) - \!\!\! \!\!\!\sum_{x \in \d^+_{slit}\cup \d^-_{slit}} \phi^0_{S'_0, p_{sd}, q}(x \arr 0) \cos(\hs\tfrac{3\pi}{2})
\bigr) \bigr|
\\ \nonumber
& = 2 \bigl|e^{i \hs \frac{\pi}{2}}\bigr|\cdot\bigl| e^{i \hs \frac{\pi}{2}} -1 \bigr|\cdot\bigl|\bigl(
1+ 2 \cos( \hs \tfrac{\pi}{2}) -   \!\!\! \!\!\!\sum_{x \in \d^+_{slit}\cup \d^-_{slit}} \phi^0_{S'_0, p_{sd}, q}(x \arr 0) \cos(\hs\tfrac{3\pi}{2})
\bigr) \bigr|
\\ \nonumber
&= 2 
\bigl|e^{i \hs \frac{\pi}{2}} -1 \bigr| \cdot
\bigl| \bigl(
1 + \sqrt{q} + \!\!\! \!\!\!\sum_{x \in \d^+_{slit}\cup \d^-_{slit}}\phi^0_{S'_0, p_{sd}, q}(x \arr 0)  \tfrac{\sqrt{q}}{2}(3-q)
\bigr) \bigr|
\\ \label {rightC3}
&\ge 2 \bigl|e^{i \hs \frac{\pi}{2}} -1 \bigr| (1+ \sqrt{q}).
\end{align}
Together, \eqref{leftC3} and \eqref{rightC3} conclude the proof with $C = \frac{1}{2} |e^{i \hs \frac{\pi}{2}} -1| (1+ \sqrt{q})$.
\end{proof}



\begin{proof} [Proof of Theorem \ref{Th0} for $q \in {[}1,3{]}$]

Let us take $p' \ge p_{sd}$. By monotonicity, we can extend the result of Lemma \ref{q3phiC} to all values of $p$ in the interval $[p_{sd},p']$.
Thus, for all $p \in [p_{sd},p']$, \eqref{q3dphidpeq} takes the form
\benn
\frac{d \phi_{G,p,q}^\xi[0\arr \d \L_n]}{(1-\phi_{G,p,q}^\xi[0\arr \d \L_n])} \ge c\frac{d p}{1-p}
\eenn
or, written differently,
\be \label{prooft3din}
- d \log (1-\phi_{G,p,q}^\xi[0\arr \d \L_n]) \ge - d \log (1-p)^c.
\ee
We can integrate \eqref{prooft3din} on $[p_{sd},p']$ to obtain
\benn
\frac{1-\phi_{G,p_{sd},q}^\xi[0\arr \d \L_n]}{1-\phi_{G,p',q}^\xi[0\arr \d \L_n]} \ge \left( \frac{1-p_{sd}}{1-p'} \right)^c
\eenn
which gives
\be
\phi_{G,p',q}^\xi[0\arr \d \L_n] \ge 1 - \left(1-\phi_{G,p_{sd},q}^\xi[0\arr \d \L_n]\right) \left( \frac{1-p'}{1-p_{sd}}\right)^c
\ge 1- \left( \frac{1-p'}{1-p_{sd}}\right)^c
\ee
where $c$ does not depend on $G$ or $n$. We can send $G$ to $\Z^2$ and $n$ to infinity to finally obtain
\be
\phi_{\Z^2,p',q}^\xi[0\arr \infty] \ge 1- \left( \frac{1-p'}{1-p_{sd}}\right)^c > 0.
\ee
The probability to have an infinite cluster is therefore positive for any $p' > p_{sd}$, a fact which immediately implies that $p_c \le p_{sd}$. Together with $\eqref{Zhang}$ it gives Theorem \ref{Th0}.
\end{proof}


\section{Proof of Theorem \ref{Th0} for $3< q \le 4$.} \label{Sketch4}

The global strategy is almost the same in this case. We work in the strip $S_n$ rather than in the box $\L_n$.
Let us define  the event $ A^*_n = \{(0,0) \underset{S_n^*} \arr (0,n)\}$ on the dual lattice and call $P^*$ the left-most dual-open path connecting $(0,0)$ to $(0,n)$. We will also call $A_n$ the event complement to $A^*_n$.

We can the set $\mS= \{x \in S_n: x\nleftrightarrow \d^+ S_n \}$. The event $A^*_n$  is equal to the event $\{ \d^- S_n \subset \mS\}$.

We define the auxiliary function $\ovphi_{p,q,n}(S)$ as follows. 
Let us take a set $S$ such that $\d^-S_n \subset S \subset S_n$ and define $\Delta S = \{ (x,y) \in E(S_n): x \in S, y \not\in S\}$. Then
\be \label{phi4}
\ovphi_{p,q,n}(S)=\sum_{\{x,y\}\in\Delta S}{\phi_{S_n,p,q}^1[\d^-S_n \underset{S}\arr x \bigm| \mS = S]}.
\ee

\begin {lemma} \label{q4dphidp}
Let $p \ge p_{sd}$, $q>1$ and $n \ge 1$. Then, for any $G$ such that $S_n \subset G \subset \Z^2$, we have that
\be \label{q4dphidpeq}
\frac{d}{dp} \phi_{G,p,q}^1[A_n] \ge 
\frac{c}{1-p} \left(\inf_{S:\, \d^-S_n \subset S \subset S_n} \ovphi_{p,q,n}(S)\right) (1-\phi_{G,p,q}^1[A_n])
\ee
where $c$ does not depend on $p, G$ or $\xi$.
\end{lemma}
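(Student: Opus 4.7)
The proof parallels that of Lemma \ref{q3dphidp}, with the box $\L_n$ replaced by the strip $S_n$ and the connection event $\{0 \arr \d \L_n\}$ replaced by the increasing event $A_n$. I would begin with the standard differential inequality coming from \eqref{alltoclosed}, \eqref{difineq} and the characterization of pivotal edges,
\[
\frac{d}{dp}\phi^1_{G,p,q}[A_n] \;\ge\; \frac{c}{1-p} \sum_{e \in E(G)} \phi^1_{G,p,q}\!\left[\,e \text{ pivotal for } A_n,\; A^*_n\,\right],
\]
and then observe that on the event $A^*_n$ any such pivotal $e$ is necessarily closed, with one endpoint in $\mS$ (the one in the cluster of $\d^- S_n$) and the other in $\mS^c$ (the one in the cluster of $\d^+ S_n$); thus $e \in \Delta \mS \subset E(S_n)$, so only edges of $E(S_n)$ contribute.

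Next I would decompose the sum according to the value of $\mS = S$, where $\d^- S_n \subset S \subset S_n$. For $(x,y)\in \Delta S$ with $x\in S$, pivotality of $(x,y)$ given $\mS=S$ is precisely $\{\d^- S_n \underset{S}\arr x\}$: the edge is forced closed by the conditioning, its endpoint $y\in \mS^c$ is in the cluster of $\d^+ S_n$ by definition, and opening $(x,y)$ produces a primal connection from $\d^- S_n$ to $\d^+ S_n$ exactly when $x$ is already connected to $\d^- S_n$ inside $S$. Factoring the joint probability as a conditional times a marginal gives
\[
\frac{d}{dp}\phi^1_{G,p,q}[A_n] \;\ge\; \frac{c}{1-p} \sum_{S}\phi^1_{G,p,q}[\mS=S] \sum_{(x,y)\in \Delta S} \phi^1_{G,p,q}\!\left[\,\d^- S_n \underset{S}\arr x \;\Big|\; \mS=S\,\right].
\]

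The principal obstacle I expect is the replacement of the inner $\phi^1_{G,p,q}$ by $\phi^1_{S_n,p,q}$ so that the inner sum becomes $\ovphi_{p,q,n}(S)$. The event $\{\d^- S_n\underset{S}\arr x\}$ is measurable with respect to $E(S)$; conditioning on $\{\mS=S\}$ closes every edge of $\Delta S$ and places every vertex of $S_n\setminus S$ in the cluster of $\d^+ S_n$. By the Domain Markov property the induced boundary conditions on $S$ are then identical under both $\phi^1_{G,p,q}$ and $\phi^1_{S_n,p,q}$ — in each case $\d S$ is fully wired together, as every one of its vertices is tied via $S_n\setminus S$ to the cluster of $\d^+ S_n$ (and the difference between the wired exterior of $S_n$ and that of $G$ only enlarges this same cluster, which is irrelevant for the measure on $E(S)$). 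Hence the two conditional probabilities agree. Once this substitution is made, pulling the infimum out of the outer sum and noting that
\[
\sum_{S:\,\d^- S_n \subset S \subset S_n}\phi^1_{G,p,q}[\mS=S] \;=\; \phi^1_{G,p,q}[A^*_n] \;=\; 1-\phi^1_{G,p,q}[A_n]
\]
yields the claimed inequality \eqref{q4dphidpeq}.
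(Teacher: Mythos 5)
Your overall strategy mirrors the paper's: pivotal-edge differential inequality, characterization of pivotal edges on the event $A_n^*$, decomposition over $\mS=S$, and factorization of the joint probability into conditional times marginal. For $G = S_n$ (which is the only case the paper actually proves: the paper's proof carries the measure $\phi^1_{S_n,p,q}$ throughout, not $\phi^1_{G,p,q}$, so the appearance of general $G$ in the lemma statement seems to be a carryover from Lemma~\ref{q3dphidp}), the ``replacement'' step you flag is trivial and your argument reduces exactly to the paper's.

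However, the justification you give for the replacement $\phi^1_{G,p,q}[\,\cdot\mid\mS=S] \to \phi^1_{S_n,p,q}[\,\cdot\mid\mS=S]$ is incorrect. You claim that after conditioning on $\{\mS=S\}$, the induced boundary conditions on $S$ are \emph{fully wired} ``as every one of $\d S$'s vertices is tied via $S_n\setminus S$ to the cluster of $\d^+S_n$.'' This is precisely backwards: conditioning on $\mS=S$ forces \emph{every} edge of $\Delta S$ to be closed, which disconnects each vertex of $\d S$ lying in the interior of $S_n$ from the entire exterior of $S$. In the Domain Markov formalism, the induced boundary condition wires two boundary vertices of $S$ only if they are joined by an \emph{open} path in the conditioned exterior configuration $\psi$ (or through the wiring of $\d G$); topological adjacency to the big cluster $S_n\setminus S$ contributes nothing. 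Consequently the induced boundary condition on $S$ is not fully wired but rather ``$\d^-S_n$ wired together, the remainder of $\d S$ free'' (the wiring of $\d^-S_n$ coming from $\d^-S_n\subset\d S_n$ being part of the wired boundary of $S_n$). For $G\supsetneq S_n$ the situation is worse still: whether the vertices of $\d^-S_n$ are glued to each other or to $\d G$ then depends on the configuration in $G\setminus S_n$, so the induced boundary conditions are \emph{not} deterministic given $\{\mS=S\}$, and for some exterior configurations they are strictly below the $G=S_n$ boundary conditions; a naive comparison would go the wrong way for your increasing event. Since the estimate is only used with $G=S_n$, the gap does not affect the paper's argument, but your proposed justification of the general-$G$ statement does not hold as written.
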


The analogue of Lemma \ref{q3phiC} for $\ovphi_{p,q,n}(S)$ is the key point of the proof and requires several additional statements. 
Firstly, we will show the following lemma:
\begin{lemma} \label{q4phiC1}
There exists a constant $c >0$ such that for any $n \ge 1$ and for any $S\in S_n$ with the properties that $\d^-S_n \subset S$ and $ \d^+ S_n \cap S = \emptyset$, we have
\be
\ovphi_{p,q,n}(S) \ge \sum_{x \in \d^-_b S_n} \phi^1_{S_n,p,q} [x \arr P^*].
\ee
\end{lemma}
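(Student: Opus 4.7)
The plan is to work configuration-wise, conditional on $\mS=S$, and reduce the lemma to a pointwise inequality between two integer-valued counts. Conditioning on $\mS=S$ forces every edge of $\Delta S$ closed, makes $S_n\setminus S$ the (random) open cluster of $\d^+S_n$, and leaves the configuration inside $S$ governed by the induced random-cluster measure. Writing $C_{\d^-}(\omega)$ for the union of open clusters of the vertices of $\d^-S_n$, I would rewrite
\benn
\ovphi_{p,q,n}(S)=\E\!\big[N(\omega)\,\big|\,\mS=S\big],\qquad N(\omega):=\#\{(z,z')\in\Delta S:\ z\in C_{\d^-}(\omega)\},
\eenn
and read the right-hand side of the lemma (under the same conditioning) as $\E[M(\omega)\mid\mS=S]$ with $M(\omega):=\#\{x\in\d^-_bS_n:\ x\arr P^*(\omega)\}$. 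It then suffices to prove the pointwise bound $N(\omega)\ge M(\omega)$ for every $\omega$ compatible with $\mS=S$, after which linearity of expectation finishes the proof.

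To get the pointwise inequality, I would build an injection $\iota$ from the $M$-set into the $N$-set. Orient $P^*(\omega)$ along its canonical traversal from $(0,0)$ up to $(0,n)$. For any $x\in\d^-_bS_n$ connected to $P^*(\omega)$, its open cluster $C(x)$ lies in $C_{\d^-}(\omega)$ and contains a primal vertex adjacent to $P^*(\omega)$; let $\iota(x)$ be the primal edge dual to the \emph{first} edge of $P^*(\omega)$ along the above orientation whose left-hand primal endpoint belongs to $C(x)$. By construction $\iota(x)\in\Delta S$ and its $S$-endpoint lies in $C_{\d^-}$, so $\iota(x)$ belongs to the target set.

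The hard part is to verify injectivity of $\iota$, particularly when several vertices of $\d^-_bS_n$ share one open cluster $C$. I expect this to be settled by a planar argument inside the strip: $\d^-_bS_n$ lies on a single horizontal line and is linearly ordered, $P^*$ is a simple dual path from bottom to top, and tracing the outer boundary of $C$ between two consecutive elements of $C\cap\d^-_bS_n$ must cross $\Delta S$ on $P^*(\omega)$ at least once (otherwise $C$ would have to cross itself or escape the strip). This produces a matching between the $k$ elements of $C\cap\d^-_bS_n$ and $k$ distinct first-entry edges of $C$ on $P^*$, giving the required injectivity. Making this planar-combinatorial step fully rigorous for clusters of arbitrary topology, and checking that ``leftmost'' enters in the right way to avoid pathological cases, is the principal technical step I anticipate.
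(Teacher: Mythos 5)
Your approach departs completely from the paper's: the paper proves this lemma by applying the parafermionic-observable identity \eqref{parobssum} in the strip with Dobrushin (``$0\backslash1$'') boundary conditions, reflecting $S$ across the middle line, and splitting the boundary sum into contributions from $\delta_b^-,\delta_t^-,\delta_{P^*},\delta_{\overline{P^*}}$; the final bound comes from trigonometric estimates on $\hs$, not from a combinatorial injection. Your plan is a configuration-wise counting argument, which would be more elementary if it worked, but it has two genuine gaps.

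First, there is a conditioning mismatch. You write $\ovphi_{p,q,n}(S)=\E[N\mid\mS=S]$, which is correct, but you then ``read the right-hand side of the lemma (under the same conditioning) as $\E[M\mid\mS=S]$.'' The lemma's right-hand side is $\sum_{x \in \d^-_b S_n} \phi^1_{S_n,p,q}[x \arr P^*]=\E[M]$, an \emph{unconditional} expectation that does not depend on $S$. A pointwise bound $N(\omega)\ge M(\omega)$ on the event $\{\mS=S\}$ only gives $\E[N\mid\mS=S]\ge\E[M\mid\mS=S]$, which is not what is claimed: there is no reason for $\E[M\mid\mS=S]\ge\E[M]$ to hold uniformly in $S$, and the lemma's whole point is a uniform-in-$S$ lower bound by an $S$-free quantity.

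Second, even granting the conditioning, the injection does not land where you claim. On $\{\mS=S\}$ the edges of $\Delta S$ are closed and their duals form the dual crossing $\Delta S^*$ bounding the cluster of $\d^+S_n$, while $P^*$ is the \emph{leftmost} dual crossing, i.e.\ the right boundary of $C_{\d^-}$. These two crossings generically differ: whenever $S$ contains open clusters that are connected neither to $\d^-S_n$ nor to $\d^+S_n$, $P^*$ runs strictly to the left of $\Delta S^*$. In that case the primal edge dual to an edge of $P^*$ has its right endpoint still inside $S$, so $\iota(x)\notin\Delta S$, and the claimed ``by construction $\iota(x)\in\Delta S$'' fails. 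The planar matching argument you outline for injectivity relies on the same identification of $P^*$ with $\d S^*$ and does not repair this. To salvage a combinatorial route one would at minimum have to replace $\Delta S$ by the boundary edges of $C_{\d^-}$ (which would no longer compute $\ovphi_{p,q,n}(S)$) or to compare $P^*$ with $\Delta S^*$ more carefully; as written the argument does not establish the lemma.
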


In order to state the other lemma, we work on the truncated universal cover $\U_k$. The reason why we use the universal cover is the following. In the proof of Section \ref{Sketch3} (see \eqref{rightC3}), we used that $\cos(\tfrac{3 \pi}{2}\hat{\sigma}) < 1$ to show that the contribution of any slit has the same sign as the one of $e_a$ and $e_b$. In order to extend this property to $q>3$, one has to consider larger opening between strips. Then, the proof is very similar (Lemma \ref{q4phiCU}). 
One key observation will be that there is no infinite cluster in the universal case (Lemma \ref{Uinfcl}). 
Combining these two facts will lead (with some work, done in Lemma \ref{inasquare}) to an estimate on the plane (Lemma \ref{twocases})
This estimate will finally be used to show that the probability of a certain event decays very fast as $p$ moves away from $p_{sd}$,
a fact which is known to imply a bound in $p_c$ (Lemma \ref{lemcros}).

\par This lemma is the analogue of Lemma \ref{q3phiC} in $\U_k$:
\begin{lemma} \label{q4phiCU}
For any choice of $q \in (3,4]$, there exists a constant $C>0$ and $k \in \N$ such that for any $n \ge 1$ and any set $S$ such that $0 \in S \subset \L_{n,k}$, we have that
\be
\varphi_{p_{sd},q,n,k}(S) >C,
\ee
where $\varphi_{p,q,n,k}$ is defined in the same way as in \eqref{phi3} for sets included in $\L_{n,k}$.
\end{lemma}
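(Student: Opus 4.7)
The plan is to mimic the proof of Lemma \ref{q3phiC} on the truncated universal cover $\U_k$, with $k=k(q)$ chosen large enough that the extra winding supplied by the sheet structure flips the sign of the cosine that controlled the argument for $q\le 3$. As in Section \ref{Sketch3}, I would restrict to the connected component $S_0\ni 0$ of $S$, introduce a slit going from $0$ up to the boundary of $\L_{n,k}$, remove it from $S_0$ to obtain $S'_0$, and view the free boundary conditions as degenerate Dobrushin conditions with $a=b=0$ so that the exploration path $\gamma$ becomes a loop around $0$. The essential geometric difference is that on $\U_k$ such a loop closes up only after traversing every one of the $2k+1$ sheets, so the winding of $\gamma$ between $e_a$ and $e_b$ is of order $k$ rather than the constant $3\pi/2$ one had on $\Z^2$.

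Next I would apply the identity \eqref{parobssum} to $V=V((S'_0)^\diamond)$ and split $\delta V$ exactly as in Section \ref{Sketch3}: $\delta_0=\{e_a,e_b\}$, $\delta_1$ the edges bordering the slit on both sides across all sheets, and $\delta_2$ the remaining boundary edges. Using the reflected domain $\overline{S'_0}$ across the $y$-axis to symmetrize, the $\delta_2$-bound \eqref{leftC3} carries over verbatim and gives
\benn
\Bigl|\sum_{e\in\delta_2}\eta_V(e)\hat F_{S'_0}(e)\Bigr|+\Bigl|\sum_{e\in\overline{\delta_2}}\eta_{\overline V}(e)\hat F_{\overline{S'_0}}(e)\Bigr|\le 4\sum_{x\in\d S_0}\phi^0_{S,p_{sd},q}(0\arr x)=4\varphi_{p_{sd},q,n,k}(S),
\eenn
which is the quantity we wish to bound from below.

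The new input is the computation of the $\delta_0\cup\delta_1$ sum. For each slit vertex I would list its four incident edges on every sheet, compute the winding of $\gamma$ from $e_b$, and expand the phase factor. Summing the resulting geometric series across the $2k+1$ sheets collapses the slit-vertex coefficient into a cosine $\cos(\alpha_k\hat\sigma)$ with $\alpha_k$ of order $k$, playing the role of the $\cos(\tfrac{3\pi\hat\sigma}{2})$ from \eqref{rightC3}, while the $\delta_0$ contribution retains a constant term proportional to $1+\sqrt q$ coming from $\cos(\hat\sigma\pi/2)=\sqrt q/2$. Since $\hat\sigma\in[0,1/3)$ for $q\in(3,4]$, one can pick $k=k(q)$ large enough that $\alpha_k\hat\sigma$ lands in $[\pi/2,3\pi/2]$, making the slit cosine nonpositive; exactly as in Lemma \ref{q3phiC}, one then discards the slit-vertex terms from the lower bound to get
\benn
\Bigl|\sum_{e\in\delta_0\cup\delta_1}\eta_V(e)\hat F_{S'_0}(e)+\sum_{e\in\delta_0\cup\overline{\delta_1}}\eta_{\overline V}(e)\hat F_{\overline{S'_0}}(e)\Bigr|\ge C'(q)
\eenn
for an explicit $C'(q)>0$, which combined with the $\delta_2$-bound yields $\varphi_{p_{sd},q,n,k}(S)\ge C'(q)/4$.

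The main obstacle will be the winding bookkeeping on $\U_k$: for every slit vertex one has to catalog the four incident edges on each sheet, track the winding of $\gamma$ from $e_b$ to each, and verify that summation across sheets telescopes cleanly into the single cosine above. A secondary subtlety is the choice of $k$: since $\hat\sigma\to 0$ as $q\to 4^-$, one needs $k(q)\to\infty$ in that limit (with a limiting argument at $q=4$), but once $q$ is fixed the constant $C=C'(q)/4$ is uniform in $n$ and $S$, which is exactly what the rest of the argument requires.
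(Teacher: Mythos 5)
Your strategy is indeed the one the paper follows: pass to the truncated cover $\U_k$, keep the connected component $S_0\ni 0$, cut along the boundary rays of $\U_k$ to get $S_0'$, view free boundary conditions as degenerate Dobrushin conditions so the exploration path is a loop around $0$, apply \eqref{parobssum} with the splitting $\delta_0,\delta_1,\delta_2$, symmetrize with the reflected domain, and bound the $\delta_2$ part exactly as in \eqref{leftC3}. The difference — and it is a genuine gap, not just unfinished bookkeeping — lies in how $k$ must be chosen and in your description of the $\delta_0$ contribution. On $\U_k$ the winding from $e_a$ to $e_b$ is $4\pi k+\tfrac{3\pi}{2}$, so the $\delta_0$ term is $2\bigl(e^{i\hs(4\pi k+\frac{3\pi}{2})}-1\bigr)$; it is \emph{not} a constant proportional to $1+\sqrt q$. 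After factoring out the common modulus $\bigl|e^{i\hs\pi/2}-1\bigr|$ it becomes the Dirichlet-type sum $1+2\sum_{m=1}^{4k+1}\cos\bigl(\tfrac{\pi m}{2}\hs\bigr)$, which depends on $k$ and can be small or even negative if $k$ is too large. Consequently the step ``make the slit cosine nonpositive and discard the slit terms'' needs \emph{two} simultaneous constraints: $\cos\bigl((4k+3)\tfrac{\pi}{2}\hs\bigr)\le 0$ (your condition, a lower bound on $k$) \emph{and} $\cos\bigl(\tfrac{\pi m}{2}\hs\bigr)\ge 0$ for every integer $m\le 4k+1$ (an upper bound on $k$), the latter being what guarantees that the $\delta_0$ part stays bounded below by a positive constant so that dropping the nonnegative slit contribution still yields $C'(q)>0$. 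With only ``$k$ large enough'' the lower bound simply does not follow. The paper resolves this by checking that the two constraints confine $k$ to an interval of length one, so an integer choice of $k=k(q)$ always exists; this existence argument is the key point your proposal is missing. (Your observation that $k(q)\to\infty$ as $q\to 4^-$ is consistent with that window, but the window is two-sided, not a mere largeness requirement; also, in the paper the geometric-series collapse arises from factoring the $\delta_0$ term, while $\delta_1$ consists only of the edges along the two cut rays of $\U_k$, with windings differing by $4\pi k$ — not a sum of slit vertices over all sheets.)
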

\par We complement this lemma with the following result.
\begin{lemma} \label{Uinfcl}
For any $k \ge 0$, there is almost surely no infinite cluster in $\U_k$, i.e.
\be
\phi^0_{\U_k, p_{sd},q}(0 \arr \infty) = 0.
\ee
\end{lemma}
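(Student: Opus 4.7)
I would prove Lemma \ref{Uinfcl} by a Zhang-style planar-duality argument adapted to the truncated cover $\U_k$. The preliminary observation is that $\U_k$ is topologically a simply connected planar surface with boundary: since $\Z^2 \setminus F_0$ is homotopy equivalent to $S^1$, its universal cover $\U$ is contractible, and truncating to sheets $|z| \le k$ preserves simple connectedness (any loop that would wind around $F_0$ must increment the sheet index and therefore cannot close up). Consequently planar duality is available in the usual form: the dual graph $\U_k^*$ is well-defined, and at $p_{sd}$ the standard correspondence $\phi^0_{\U_k,p_{sd},q}(\omega) = \phi^1_{\U_k^*,p_{sd},q}(\omega^*)$ holds, with $\U_k^* \cong \U_k$ up to translation and a boundary identification, by the self-duality of $\Z^2$.

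Next I argue by contradiction. Suppose $\phi^0_{\U_k,p_{sd},q}(0 \arr \infty) > 0$. The graph $\U_k$ is invariant under translations $x \mapsto x+1$ restricted to vertices with $|x|$ large (away from the slit), and this suffices to apply an ergodic zero-one argument: the event of existence of an infinite cluster has probability one. The finite energy property \eqref{alltoclosed} together with a Burton--Keane argument then forces the infinite cluster to be almost surely unique.

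I would then run Zhang's argument in $\U_k$. Using the reflection symmetry $z \to -z$ of $\U_k$, the $x$-translation invariance, FKG and the uniqueness of the infinite cluster, I would conclude that with positive probability the unique primal infinite cluster simultaneously sends arms to infinity in four well-chosen sectors of $\U_k$ (for instance, the sectors $\{x \ge n\}$ and $\{x \le -n\}$ combined with the two sheet-halves $z > 0$ and $z < 0$). By the self-duality at $p_{sd}$, the same holds for the dual configuration on $\U_k^* \cong \U_k$. But the coexistence, inside a simply connected planar surface, of a primal infinite cluster surrounding any finite region and a dual infinite cluster is topologically impossible --- any dual infinite path would be forced to cross a primal open arc. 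This contradiction yields $\phi^0_{\U_k,p_{sd},q}(0 \arr \infty) = 0$.

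\textbf{Main obstacle.} The chief difficulty is that $\U_k$ has a much smaller symmetry group than $\Z^2$: the slit destroys invariance under $y$-translations and under the reflection $x \to -x$, so Zhang's classical four-sector setup does not carry over verbatim. One must substitute the missing rotations by a combination of the sheet-reflection $z \to -z$, the residual $x$-translations, and FKG-based comparisons between translates. A secondary technical point is to verify carefully that the duality $\U_k^* \cong \U_k$ with free $\leftrightarrow$ wired interchange on $\partial \U_k$ really does produce a symmetric situation at $p_{sd}$, so that the ``primal cluster + dual cluster = contradiction'' step is rigorous in this non-standard planar embedding.
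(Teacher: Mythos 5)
Your plan hinges on three tools --- an ergodic zero--one law, Burton--Keane uniqueness, and Zhang's four-sector construction --- all of which require a translation-invariant (or at least quasi-transitive) setting, and $\U_k$ simply does not have one. The branch face and the truncation boundary $\d \U_k$ are distinguished: the only automorphisms left are finitely many reflections (e.g.\ through the axes $\ell_r$), and the measure $\phi^0_{\U_k,p_{sd},q}$ is not invariant under $x \mapsto x+1$ even far from the slit, so ``translation invariance away from the slit'' cannot feed any ergodic theorem, Burton--Keane gives you no uniqueness of the infinite cluster, and without uniqueness (and without the rotational symmetry between sectors) the Zhang contradiction does not close. The sentence in which you propose to ``substitute the missing rotations by the sheet-reflection, residual $x$-translations and FKG'' is exactly the missing proof rather than a technical remark; as it stands the argument has a genuine gap at its central step.

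The paper's route avoids all of this. It proves a crossing-type estimate (Lemma \ref{prop1}): for any \emph{simple symmetric} domain $\Omega$ inside a sector, a dual arc from $\ell_1$ to $\d\Omega \cap \mathrm{sect}_{-1,0}$ occurs with probability at least $\tfrac12$, the proof using only duality at $p_{sd}$, comparison of boundary conditions and the finite energy property; the sole appeal to Zhang's argument is on $\Z^2$ itself (after gluing sector boundaries back into the plane), where $\phi^0_{\Z^2,p_{sd},q}(0 \arr \infty)=0$ is already known --- it is never run on $\U_k$. The proof of Lemma \ref{Uinfcl} then proceeds constructively: given a dual path $\gamma$ joining $\ell_{-4k-2}$ to $\ell_r$ outside $\L_{n,k}$, one reflects $\gamma$ through $\ell_r$ to manufacture a new simple symmetric domain and applies Lemma \ref{prop1} again, advancing one axis per step; after $8k+4$ conditioned steps one obtains a dual path in $\U_k \setminus \L_{n,k}$ disconnecting $0$ from infinity with probability at least $(\tfrac12)^{8k+4}$, uniformly in $n$, which yields the lemma. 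To salvage your approach you would have to establish, on $\U_k$, both a uniqueness statement and a symmetric multi-arm construction from scratch --- a substantially harder task than the reflection scheme the paper uses.
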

Combined together, these lemmas give the following technical estimate on $\Z^2$:
\begin{lemma} \label{inasquare}
For any $M>0$, there exists $R$ large enough such that for every $n > R$ and for any $\gamma$ connecting $\d \L_n$ and $\d \L _R$,
\benn
\sum_{(-i,-n), i \in [0,n]} \phi^\xi_{\L_n \backslash(\L_{R}\cup \gamma),p_{sd},q} [x \arr \L_{R}\cup \gamma]   \ge M,
\eenn
where $\xi$ denotes free boundary conditions on $\L_n$ and wired boundary conditions on $\L_{R}\cup \gamma$ (see Figure \ref{figlemsq}).
\end{lemma}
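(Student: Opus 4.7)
The plan is to combine Lemmas \ref{q4phiCU} and \ref{Uinfcl} into an unconditional lower bound on the expected edge-boundary size of a cluster in $\U_k$, and then to transfer this bound to the slit plane domain $\Lambda_n \setminus (\Lambda_R \cup \gamma)$ by viewing the latter as a single fundamental domain of a universal-cover construction in which $\Lambda_R$ takes the role of the face $F_0$ and $\gamma$ the role of the branch cut.

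I would first repeat, inside $\U_k$, the pivotal-edge / Domain-Markov calculation appearing in the proof of Lemma \ref{q3dphidp}. Letting $\mathcal{S}_0$ denote the connected component of $0$ in $\{v \in \L_{N,k} : v \nleftrightarrow \d \L_{N,k}\}$, the pointwise bound $\varphi_{p_{sd},q,N,k}(S) > C$ from Lemma \ref{q4phiCU}, summed over admissible $S$, yields
\[
 \sum_{S:\, 0 \in S \subset \L_{N,k}} \varphi_{p_{sd},q,N,k}(S)\,\phi^0_{\U_k, p_{sd}, q}[\mathcal{S}_0 = S] \;\ge\; C\,\phi^0_{\U_k, p_{sd}, q}[0 \nleftrightarrow \d \L_{N,k}].
\]
Lemma \ref{Uinfcl} lets $N \to \infty$ push the right-hand side to $C$, while Domain Markov unfolds the left-hand side into $\E^0_{\U_k,p_{sd},q}[\#\Delta \mathcal{C}_0]$, where $\mathcal{C}_0$ is the almost surely finite cluster of $0$ in $\U_k$. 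Thus $\E^0_{\U_k,p_{sd},q}[\#\Delta \mathcal{C}_0] \ge C$.

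Next I would transfer this estimate to the plane. Because the slit $\gamma$ prevents configurations in $\Lambda_n \setminus (\Lambda_R \cup \gamma)$ from wrapping around $\Lambda_R$, the local structure of clusters near the wired obstacle is analogous to that in a single sheet of a truncated universal cover built around an enlarged central face $\Lambda_R$; the wired condition on $\Lambda_R \cup \gamma$ matches the wiring of the lifted obstacle on the central sheet. Applying the $\U_k$-estimate at a base vertex $v \in \d \Lambda_R$, combined with the monotonicity of boundary conditions \eqref{boundcond} and Domain Markov, produces a lower bound of order $C$ on $\sum_{i \in [0,n]} \phi^\xi_{\Lambda_n \setminus (\Lambda_R \cup \gamma), p_{sd}, q}[(-i,-n) \arr v]$, and hence on the quantity of the lemma. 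To upgrade the fixed $C$ into an arbitrary $M$, I expect the argument has to be iterated over $\lceil M/C \rceil$ well-separated reference vertices distributed along $\d \Lambda_R$: after a Domain Markov decomposition that conditions on the configuration already explored in a neighbourhood of the previously used reference vertices, monotonicity of boundary conditions ensures that each new reference vertex contributes a further amount of order $C$. Choosing $R$ large enough to accommodate the required number of such reference vertices with enough spacing then closes the argument.

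The main obstacle is the faithful correspondence in the second step between a plane configuration in the slit annulus and a single sheet of the universal cover with matching boundary conditions; one must in particular verify that the parafermionic computation supporting Lemma \ref{q4phiCU}, which is anchored at the vertex $0$, still delivers the required lower bound after translation to a base vertex on $\d \Lambda_R$ and under the specific wired/free boundary data on $\Lambda_n \setminus (\Lambda_R \cup \gamma)$. A secondary, still delicate point is the iteration: since $\Lambda_R \cup \gamma$ is a single wired cluster, a naive summation over reference vertices merely reproduces the bound $C$, so genuinely additive contributions require isolating disjoint pivotal neighbourhoods around distinct reference vertices, likely via a successive Domain Markov exploration.
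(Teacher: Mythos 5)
Your proposal correctly identifies that Lemmas~\ref{q4phiCU} and~\ref{Uinfcl} are the two inputs, and the ``slit annulus as one sheet of a cover'' intuition for the transfer step is on the right track, but the crucial amplification mechanism --- the way a fixed constant $C$ is turned into an arbitrary $M$ --- is missing, and the alternative you sketch does not close.

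The paper does not iterate over several reference vertices at all. Instead, it conditions on the \emph{entire} box $\L_{R,2k}$ being open and uses the chain of inequalities
\[
\sum_{x\in I_n}\phi^0_{\L_{n,2k}}\bigl[x \arr \L_{R,2k}\,\bigm|\,\L_{R,2k}\text{ open}\bigr]
\;\ge\;
\frac{\sum_{x\in I_n}\phi^0_{\L_{n,2k}}[0\arr x]}{\phi^0_{\L_{n,2k}}[0\arr\d\L_{R,2k}]}
\;\ge\;
\frac{c}{\phi^0_{\L_{n,2k}}[0\arr\d\L_{R,2k}]},
\]
whose first step follows by decomposing on the configuration inside $\L_{R,2k}$: given any $\omega_R$ with $0 \arr \d\L_{R,2k}$, one has $\phi[0\arr x\mid\omega_R]\le\phi[x\arr\d\L_{R,2k}\mid\omega_R]\le\phi[x\arr\d\L_{R,2k}\mid\L_{R,2k}\text{ open}]$ by comparison of induced boundary conditions. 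Lemma~\ref{Uinfcl} is then used not to cap the expected cluster size but to send the \emph{denominator} $\phi^0_{\L_{n,2k}}[0\arr\d\L_{R,2k}]$ to zero as $R\to\infty$, which is exactly what makes the lower bound as large as one wishes. Your proposed route --- summing contributions from $\lceil M/C\rceil$ well-separated reference vertices on $\d\L_R$, isolated by successive Domain Markov explorations --- faces the obstruction you yourself flag: since $\L_R\cup\gamma$ is one wired cluster, distinct reference vertices do not contribute independently, and it is not clear that ``disjoint pivotal neighbourhoods'' can be manufactured so as to stack the $O(C)$ contributions additively. In fact if this were easy the lemma would not need Lemma~\ref{Uinfcl} at all; the paper's argument makes clear that the divergence genuinely comes from the absence of an infinite cluster on the cover.

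The transfer to the plane is also more concrete than your sketch. The paper first isolates a boundary piece $I_{n,k}$ of $\d\L_{n,k}$ carrying a constant fraction of the total $C$, moves the base point to level $j$ by finite energy, re-centres using comparison of boundary conditions inside $\L_{n,2k}$, and only then projects to $\Z^2$: the key point is that on $\L_{n,2k}\setminus(\L_{R,2k}\cup\tilde\gamma)$ the cluster of $I_n$ cannot wind around $\L_{R,2k}$ (any winding path would cross $\tilde\gamma$), so the projection is injective on the relevant event and the probability bound passes unchanged to the slit annulus in $\Z^2$. Your ``fundamental domain'' heuristic captures the same idea but without this injectivity argument there is no actual coupling between the two measures, and the parafermionic estimate of Lemma~\ref{q4phiCU}, being anchored at $0$, does not directly ``translate to a base vertex on $\d\L_R$'' --- the paper avoids ever needing such a translation by keeping $0$ as the base point throughout and conditioning $\L_R$ to be open.
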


\begin{figure}[h!]
\centering
\includegraphics[width=0.5\textwidth]{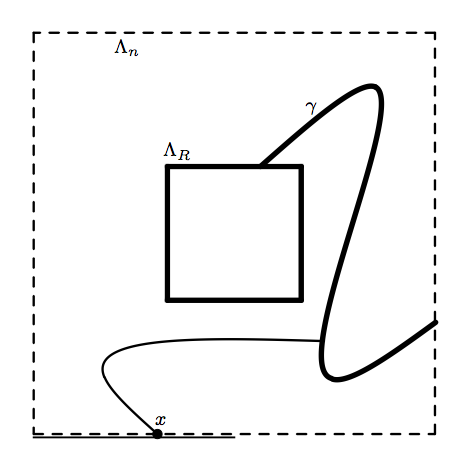}
\caption{Boundary point connected to $\L_R \cup \gamma$ (Lemma \ref{inasquare})}
\label{figlemsq}
\end{figure}
In this figure and the further pictures the free boundary conditions are represented by the dashed lines and the wired boundary conditions are represented by the bold lines.
\medskip

\par We use this lemma to obtain the following result.
\begin{lemma} \label{twocases}
Let us fix $\e>0$ and $R>0$. Then, for any $n$ large enough, one of the following statements should hold:
\begin{enumerate}
\item[Case 1] \label{case1}
\be  \label{q4phiCeq1}
\sum_{x \in \d^-_b S_n} \phi^1_{S_n,p_{sd},q} [x \arr 0] > R \log n.
\ee
\item[Case 2] \label{case2}
\be \label{case2eq}
\phi^0_{S_n,p_{sd}-\e,q} (0 \arr \d \L_n) < n^{-20}.
\ee
\end{enumerate}
\end{lemma}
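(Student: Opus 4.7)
The plan is to argue by contraposition: assuming Case 2 fails, i.e.\ $\phi^0_{S_n,p_{sd}-\e,q}(0\arr\d\L_n)\ge n^{-20}$, I will deduce Case 1 for all $n$ large enough. By stochastic monotonicity in $p$ and in the boundary condition this upgrades immediately to
$$\phi^1_{S_n,p_{sd},q}(0\arr\d\L_n)\ge n^{-20},$$
so with polynomially small but positive probability, under the wired critical measure, there is already an open path in $S_n$ from $0$ out to $\d\L_n$.

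Next I would feed such a path into Lemma \ref{inasquare}. Fix a parameter $M>0$ (to be chosen below as a function of $n$) and let $R_0=R_0(M,q)$ be the radius supplied by that lemma. On the event $\{0\arr\d\L_n\}$ and for $n>R_0$, extract a canonical open simple path $\gamma$ from $0$ to $\d\L_n$ — for instance the leftmost one, to keep measurability — and let $\tilde\gamma$ denote its portion outside $\L_{R_0}$, so that $\tilde\gamma$ joins $\d\L_{R_0}$ to $\d\L_n$. Conditioning on the full traces of both $\tilde\gamma$ and $\gamma\cap\L_{R_0}$, the Domain Markov property combined with monotonicity in boundary conditions yields that the conditional law of the configuration on $\L_n\setminus(\L_{R_0}\cup\tilde\gamma)$ dominates the measure appearing in Lemma \ref{inasquare}. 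That lemma therefore delivers
$$\sum_{x\in\d^-_b S_n\cap\d\L_n}\phi^1_{S_n,p_{sd},q}\bigl(x\arr\L_{R_0}\cup\tilde\gamma\,\bigm|\,\tilde\gamma,\,\gamma\cap\L_{R_0}\bigr)\ge M,$$
and since $\gamma$ is an open path from $0$ through $\L_{R_0}$ out to $\tilde\gamma$, every open connection from $x$ to $\L_{R_0}\cup\tilde\gamma$ yields an open connection from $x$ to $0$.

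Averaging over the conditioning using the lower bound $\ge n^{-20}$ on the existence of the initial path, and choosing $M$ of order $n^{21}\log n$ — specifically $M\ge 2Rn^{20}\log n$ — would then give
$$\sum_{x\in\d^-_b S_n}\phi^1_{S_n,p_{sd},q}(x\arr 0)\;\ge\; M\cdot n^{-20}\;\ge\; 2R\log n,$$
which is (a strengthening of) Case 1.

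The hard part will be the quantitative dependence $M\mapsto R_0(M)$ in Lemma \ref{inasquare}: the plan requires $R_0(2Rn^{20}\log n,q)\le n$, i.e.\ at most polynomial growth of $R_0$ in $M$. One therefore has to revisit the proof of Lemma \ref{inasquare} (which itself rests on Lemmas \ref{q4phiCU} and \ref{Uinfcl} on the truncated universal cover) to extract an effective estimate, and I expect polynomial dependence to emerge from the parafermionic bound on $\U_k$. The secondary obstacles are the coordinate alignment between the square $\L_n$ of Lemma \ref{inasquare} — whose summation index set $\{(-i,-n),\,i\in[0,n]\}$ sits on the bottom edge — and the strip $S_n$ here, whose relevant boundary is $\d^-_b S_n$, as well as verifying carefully that the wired boundary conditions produced by Domain Markov genuinely dominate those appearing in Lemma \ref{inasquare}.
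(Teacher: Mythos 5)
There is a genuine gap, and it is fatal to the route you chose. Your whole plan rests on applying Lemma \ref{inasquare} with $M$ of order $n^{20}\log n$ while keeping the associated radius $R_0(M)$ below $n$. But Lemma \ref{inasquare} is purely qualitative: its proof chooses $R$ so that $\phi^0_{\L_{n,2k},p_{sd},q}[0\arr\d\L_{R,2k}]<\e\approx c/M$, and the only input for this is Lemma \ref{Uinfcl}, the soft statement that there is no infinite cluster in $\U_k$, which gives no rate at all. To make your scheme work you would need the one-arm probability at $p_{sd}$ in the truncated universal cover to decay at least like $R^{-20}$; nothing in the paper comes close to such an estimate, and it is not believable (critical one-arm exponents for $1\le q\le 4$ are small, nowhere near $20$), so this is not a technical detail one could ``extract from the parafermionic bound'' — the needed strengthening is essentially false. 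There are secondary problems as well: conditioning on the \emph{leftmost} open path from $0$ to $\d\L_n$ injects negative information precisely in the region to its left where the points $x\in\d^-_bS_n$ live, so the claimed stochastic domination of the conditional law by the measure of Lemma \ref{inasquare} does not follow from Domain Markov plus monotonicity; and a connection of $x$ to $\L_{R_0}\cup\tilde\gamma$ only yields $x\arr 0$ after forcing $\L_{R_0}$ (or at least its link to $\gamma$) to be open, whose finite-energy cost is $\exp(-cR_0^2)$ — harmless only when $R_0$ is a constant, which is exactly how the paper uses it, but ruinous when $R_0$ grows with $n$.

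The paper avoids transferring a tiny probability $n^{-20}$ into a large sum altogether. Its proof is a multi-scale dichotomy: for each of roughly $\log_\rho n$ scales $k$ it looks at the vertical crossing probability $p_{\rho^k(\rho-1),n}$ of a box of fixed aspect ratio. At scales where this probability is small ($\le\tfrac14$, the set $K_n$), the constructions of Section \ref{twoc} (the induction on $A^b_{m,n,i}\cap A^t_{m,n,i}$, Corollary \ref{CorLem3} and Lemma \ref{contog}) build, with probability bounded below by a constant, a long open corridor whose uppermost realisation is then fed into Lemma \ref{inasquare} with a \emph{fixed} constant $M$; each such scale contributes a constant to $\sum_{x\in\d^-_bS_n}\phi^1_{S_n,p_{sd},q}[x\arr 0]$, and if $|K_n|$ is at least half of all scales this sums to $R\log n$, i.e.\ Case 1 (Lemma \ref{case1pf}). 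At scales where the crossing probability is large, one gets with uniformly positive probability a dual blocking half-circuit in a half-annulus, and since these annuli are disjoint the expected Hamming distance to $\{0\arr\d\L_n\}$ is of order $\log n$; the differential inequality \eqref{Hamdist}, integrated over $[p_{sd}-\e,p_{sd}]$, then yields the polynomial decay \eqref{case2eq}, i.e.\ Case 2 (Lemma \ref{case2lems}). The key idea you are missing is this conversion of ``many blocking circuits at criticality'' into sharp decay of the connection probability slightly below $p_{sd}$ via the Hamming-distance inequality, which is what makes a purely qualitative Lemma \ref{inasquare} sufficient.
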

For the second case, we can rewrite \eqref{case2eq} using the dual model on the dual lattice to get:
\be \label{T42eq}
\phi_{S_n,(p_{sd} +\delta)^* ,q}^{1^*}[A_n^*] \le \phi_{S_n,(p_{sd} +\delta)^* ,q}^{1^*}[0 \overset{*}\arr \d \L_n]  \le n^{-20}.
\ee

If the first case holds, then combined with Lemma \ref{q4phiC1}, it gives that for any $R > 0$, for $n$ large enough
\be  \label{q4phiCeq}
\ovphi_{p,q,n}(S) > R \log n
\ee
for any $S \in S_n$ with the properties $\d^-S_n \subset S$ and $ \d^+ S_n \cap S = \emptyset$.

The combination of this inequality with Lemma \ref{q4dphidp} implies the following proposition.
\begin{prop} \label{T4}
For $n$ such that \eqref{q4phiCeq1} holds and for any $\delta> 0$, we have that
\be \label{T41eq}
\phi_{S_n,(p_{sd} +\delta)^* ,q}^{1^*}[A_n^*] \le n^{-20}.
\ee
\end{prop}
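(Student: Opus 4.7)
The plan is to mirror the proof of Theorem \ref{Th0} for $q \in [1,3]$, but with the constant lower bound on $\varphi$ there replaced by the $\log n$ lower bound on $\ovphi$ furnished here; the stronger bound is exactly what converts the ``positive probability of percolation at $p > p_{sd}$'' conclusion of the earlier proof into the polynomial decay \eqref{T41eq}.

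Under the hypothesis \eqref{q4phiCeq1}, combined with Lemma \ref{q4phiC1}, we already have \eqref{q4phiCeq}: $\ovphi_{p_{sd}, q, n}(S) > R \log n$ uniformly in admissible $S$. The first task is to propagate this to every $p \in [p_{sd}, p_{sd}+\delta]$. By Lemma \ref{q4phiC1}, $\ovphi_{p, q, n}(S)$ is bounded below by a sum of random-cluster probabilities of increasing events on $S_n$ (with no $p$-dependent conditioning), so the usual FKG-monotonicity in $p$ extends the bound to the entire interval, exactly as in the $q \in [1,3]$ case.

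With this uniform bound in hand, Lemma \ref{q4dphidp} applied with $G = S_n$ gives $-\frac{d}{dp} \log(1 - \phi^1_{S_n, p, q}[A_n]) \geq \frac{cR\log n}{1-p}$ on $[p_{sd}, p_{sd}+\delta]$. Integrating from $p_{sd}$ to $p_{sd}+\delta$ and discarding the factor $1 - \phi^1_{S_n, p_{sd}, q}[A_n] \leq 1$ yields
$$1 - \phi^1_{S_n, p_{sd}+\delta, q}[A_n] \leq \left(\frac{1-p_{sd}-\delta}{1-p_{sd}}\right)^{cR\log n} = n^{-cR \log \frac{1-p_{sd}}{1-p_{sd}-\delta}}.$$
Since $R$ can be chosen arbitrarily large in Lemma \ref{twocases}, I fix $R$ so that $cR\log\frac{1-p_{sd}}{1-p_{sd}-\delta} \geq 20$. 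Finally, the duality of the random-cluster measure identifies $\phi^1_{S_n, p_{sd}+\delta, q}$ with $\phi^{1^*}_{S_n, (p_{sd}+\delta)^*, q}$ via the bijection $\omega \leftrightarrow \omega^*$, and since $A_n$ is the complement of $A_n^*$,
$$\phi^{1^*}_{S_n, (p_{sd}+\delta)^*, q}[A_n^*] = 1 - \phi^1_{S_n, p_{sd}+\delta, q}[A_n] \leq n^{-20},$$
which is \eqref{T41eq}.

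The main obstacle --- really the only non-routine step --- is the monotonicity-in-$p$ extension of the $\ovphi$ bound: one must verify that the lower bound on $\ovphi_{p, q, n}(S)$ coming from Lemma \ref{q4phiC1} consists of standard random-cluster probabilities of increasing events without $p$-dependent conditioning, after which FKG yields the extension. Once that is in place, the differential inequality integrates elementarily and the final duality step is textbook.
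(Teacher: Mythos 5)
Your overall route is exactly the paper's: extend the $R\log n$ lower bound on $\ovphi$ to the interval $[p_{sd},p_{sd}+\delta]$ by monotonicity in $p$, feed it into Lemma \ref{q4dphidp}, integrate the resulting differential inequality, choose $R$ large enough that the exponent beats $20$, and pass to the dual measure; keeping the factor $\tfrac{1}{1-p}$ in the integration instead of bounding it below by $1$ (as the paper does) is an immaterial variant, as is applying Lemma \ref{q4dphidp} directly with $G=S_n$ rather than letting $G\to S_n$.

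However, the justification you offer for the step you yourself single out as the only non-routine one does not work as stated. The lower bound in Lemma \ref{q4phiC1} is $\sum_{x\in\d_b^-S_n}\phi^1_{S_n,p,q}[x\arr P^*]$, and the events $\{x\arr P^*\}$ are \emph{not} increasing: they entail the existence of the dual-open crossing $P^*$ of the strip, which is a decreasing event, so stochastic monotonicity in $p$ does not apply to them. Moreover, Lemma \ref{q4phiC1} rests on the parafermionic relation \eqref{parobssum}, which is only available at $p_{sd}$, so the lemma cannot be invoked at $p>p_{sd}$ in the first place. The monotonicity actually needed (and what the paper's terse ``by monotonicity'' refers to) is that of $\ovphi_{p,q,n}(S)$ itself: on the event $\{\mS=S\}$ every edge of $\Delta S$ is closed, so by the Domain Markov property the conditional measure on $S$ is a random-cluster measure on $S$ with boundary conditions induced by the wiring of $\d S_n$ and the closed edges of $\Delta S$, which do not depend on $p$; each summand in \eqref{phi4} is then the probability of the increasing event $\{\d^-S_n \underset{S}\arr x\}$ under that measure, hence nondecreasing in $p$. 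With the monotonicity argued this way, starting from \eqref{q4phiCeq} at $p_{sd}$ (which you correctly take from the hypothesis \eqref{q4phiCeq1} combined with Lemma \ref{q4phiC1}), the rest of your argument goes through and coincides with the paper's proof.
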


\par The fact that $p_{sd} \ge p_c$ for $3 < q \le 4$ is the immediate consequence of \eqref{T42eq} and \eqref{T41eq} and the following lemma:
\begin{lemma}[\cite{DC13}] 
\label{lemcros} 
Suppose $p' < p_c$. Then, for infinitely many $n \in \N$, we have that
\be
\phi^{1^*}_{S_n,(p')^*,q} [A_n^*] > n^{-20}.
\ee
\end{lemma}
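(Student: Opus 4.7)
The plan, following \cite{DC13}, is to argue by contraposition: assume that $\phi^{1^*}_{S_n, (p')^*, q}[A_n^*] \le n^{-20}$ for every $n \ge N$ and deduce that $p' \ge p_c$, contradicting the hypothesis. The first step is to invoke planar duality: under the configuration involution $\omega \mapsto \omega^*$ the measure $\phi^{1^*}_{S_n, (p')^*, q}$ is identified with the primal measure $\phi^1_{S_n, p', q}$, while $A_n^*$ is by definition the complement of the increasing primal event $A_n$. The assumption is therefore equivalent to
\benn
\phi^1_{S_n, p', q}[A_n] \ge 1 - n^{-20}, \qquad n \ge N.
\eenn

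Next I would exploit the geometry of the strip. Since $S_n$ is unbounded in the horizontal direction, any configuration realising $A_n$ must contain an infinite primal-open path in $S_n$ that separates the two dual endpoints; in particular, it must contain an infinite primal cluster crossing the vertical line $\{x=0\}$. By translation invariance in $x$ and the FKG inequality \eqref{boundcond}, the analogous lower bound holds for each horizontal translate of $A_n$, and a union bound over $O(\mathrm{poly}(n))$ many such translates, each still of probability $\ge 1 - n^{-20}$, produces with probability tending to $1$ primal connections between arbitrarily distant points inside $S_n$. A weak-limit argument combined with the boundary-conditions comparison \eqref{boundcond1} then transfers this estimate to $\phi^1_{\Z^2, p', q}$ and yields an infinite primal cluster with positive probability in $\Z^2$ at parameter $p'$. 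By definition of $p_c$ this forces $p' \ge p_c$, contradicting the hypothesis $p'<p_c$ and completing the proof.

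\textbf{Main obstacle.} The delicate point is the extraction of an infinite cluster in $\Z^2$ from the strip-crossing estimate: the wired boundary conditions along the top and bottom of $S_n$ are not literally the wired boundary conditions at infinity in $\Z^2$, so a careful limiting/comparison step is required when passing from the strip to the plane. Fortunately, the quantitative form of the hypothesis ($\le n^{-20}$) is far stronger than what is strictly needed—already a uniform bound of the form $\phi^{1^*}_{S_n,(p')^*,q}[A_n^*]\le 1-c$ would suffice via a Borel--Cantelli or RSW-style construction—so the polynomial slack comfortably absorbs the loss coming from the union bound over $\mathrm{poly}(n)$ translates and from the limit $n\to\infty$.
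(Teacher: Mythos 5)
The paper itself gives no proof of this lemma---it is imported wholesale from \cite{DC13}---so your sketch has to stand on its own, and as written it does not. The first genuine gap is the claim that a configuration realising $A_n$ ``must contain an infinite primal-open path''. This is false: $A_n$ is merely the event that $\d^-S_n$ and $\d^+S_n$ are joined by an open path in $S_n$, i.e.\ the complement of a single dual crossing that enters and exits the strip through the fixed gaps at $x=\tfrac12$; it is realised, for instance, by one open boundary edge next to the origin. Because of this, the next step also fails: intersecting the $O(\mathrm{poly}(n))$ horizontal translates of $A_n$ only excludes dual top--bottom crossings entering and exiting at the \emph{same} column, not skew dual crossings (which are not translates of $A_n^*$), and even ruling out all dual crossings in a finite window would give one crossing of one rectangle, not ``connections between arbitrarily distant points'', let alone an infinite cluster. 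Concretely, two disjoint short open arcs along the bottom and top lines at staggered positions satisfy every translate of $A_n$ in a long window while creating no long connection at all. Converting the hypothesis into long primal connections genuinely requires extra input (e.g.\ a reflection--FKG gluing step to control skew crossings, which costs square roots of probabilities---one reason the exponent $20$ is taken so large), followed by a multi-scale Borel--Cantelli construction over dyadic scales; a statement at a single $n$, or even at each $n$ separately in different measures, cannot contradict $p'<p_c$.

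The second gap is the boundary-condition step, which you flag as the ``delicate point'' and then dismiss via ``polynomial slack''. It cannot be dismissed, because the comparison \eqref{boundcond1} points the wrong way: $A_n$ is increasing and the wired conditions on $\d S_n$ are the most favourable for it, so $\phi^1_{S_n,p',q}[A_n]\ge 1-n^{-20}$ gives no lower bound whatsoever on $\phi^{\xi}_{\Z^2,p',q}[A_n]$ for the wired infinite-volume measure or for the boundary conditions induced when one tries to run a construction inside $\Z^2$; equivalently, in the dual the hypothesis is stated under the boundary conditions \emph{least} favourable to $A_n^*$, i.e.\ it is the weakest form of the assumption. A union bound over translates does not repair a comparison inequality with the wrong orientation, and no quantitative loss is identified that the exponent could ``absorb''. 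This transfer from the wired-strip estimate to a single measure on $\Z^2$ is exactly where the argument in \cite{DC13} does its real work (Domain Markov conditioning, symmetry and FKG gluings, union bounds over the endpoints of crossings, summation over scales), and it is also why your closing remark that a uniform bound of the form $1-c$ ``would suffice'' is unsupported. As it stands, the proposal does not prove the lemma.
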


The rest of the paper is organised as follows.
In Section \ref{Proofthlem}, we show Lemma \ref{q4dphidp} and Theorem \ref{T4}.
Then in Section \ref{Prooflem1}, we use the parafermionic observable to prove Lemmas  \ref{q4phiC1} and \ref{q4phiCU}. 
Lemma \ref{Uinfcl} is proven in Section \ref{ProofUinfcl}.
Then, in Section \ref{square} we focus on Lemma \ref{inasquare}.
Lemma \ref{twocases} is the final step to conclude the proof and is shown in Section \ref{twoc}.


 
\section{Proofs of Lemma \ref{q4dphidp} and Proposition \ref{T4}.} \label{Proofthlem}
 These proofs use the same strategies as in the case $q \le 3$.

\begin{proof} [Proof of Lemma \ref{q4dphidp}]
(see also \cite{DCT16, DCT216}). 
Let us remind that the event complement to $A_n$ is equal to $\{ \d^- S_n \subset \mS\}$.
Then,
\begin{align} \nonumber
\frac{d}{dp} \phi_{S_n,p,q}^1[A_n] &\ge \frac{c}{1-p}\!\! \sum_{e\in E(S_n)} \! \phi_{S_n,p,q}^1[e \text{ is pivotal for } A_n, A_n \text{ does not occur}] \\
&\ge \frac{c}{1-p} \sum_{S:\,\d^-S_n \subset S \subset S_n} \sum_{(x,y) \in \Delta S} \phi_{S_n,p,q}^1 [\d^- S_n \underset{S} \arr x, \mS = S] \nonumber \\
&\ge \frac{c}{1-p} \sum_{S:\,\d^-S_n \subset S \subset S_n}  \ovphi_{p,q,n} (S) \phi_{S_n,p,q}^1 [ \mS = S] \nonumber \\
& \ge \frac{c}{1-p} \left( \inf_{S:\,\d^-S_n \subset S \subset S_n} \ovphi_{p,q,n} (S)  \right) (1 -\phi_{S_n,p,q}^1[A_n] ). \nonumber
\end{align}
\end{proof}

\begin{proof}[Proof of Proposition  \ref{T4}]
This proof uses the same method as for the proof given in Section \ref{Sketch3}. Fix $p' = p_{sd}+\delta$ for some $\delta>0$. By monotonicity, \eqref{q4phiCeq} holds for any $p \in [p_{sd}, p']$. 
This together with \eqref{q4dphidpeq} gives that
\benn
- d \left( \log(1-\phi_{G,p,q}^1[A_n])\right) \ge \frac{R}{1-p} \log n \, d p \ge R \log n \, dp.
\eenn
After integrating we obtain that 
\benn
\phi_{G,p_{sd}+\delta ,q}^1[A_n] \ge 1- n^{-R\delta}.
\eenn
Let us choose $R$ large enough to have $R \delta > 20$. Then, when $G$ goes to $S_n$, we obtain 
\benn
\phi_{S_n,p_{sd}+\delta ,q}^1[A_n] \ge 1- n^{-20}
\eenn
or, written in the dual model, 
\benn
\phi_{S_n,(p_{sd}+\delta)^* ,q}^{1^*}[A_n^*] \le n^{-20}.
\eenn
\end{proof}

\section {Proof of Lemmas \ref{q4phiC1} and \ref{q4phiCU}. }\label{Prooflem1}

\begin{proof} [Proof of Lemma \ref{q4phiCU}]

This proof uses the same strategy as in Lemma \ref{q3phiC}, but on $\U_k$ instead of $\Z^2$. Let us look at a set $S \in \L_{n,k}$ containing $0$ and denote $\overline{S} = \{(-x,y,-z), (x,y,z) \in S\}$ its reflection. 
We are interested in the connected component containing zero, denoted by $S_0$. 
Let us look at the set $\L_{n,k}' = \L_{n,k} \backslash \d\U_k$ and study the sets $S'_0 = \L_{n,k}' \cap S_0$ and $\overline{S'_0} = \L_{n,k}' \cap \overline{S_0}$.
\par For boundary conditions $(0)$ defined as before we can define the exploration path  both for $S'_0$ and $\overline{S'_0}$. Its initial and final edges are of the form $e_a = \bigl((0,-\tfrac{1}{2},-k)(\tfrac{1}{2},0,-k)\bigr)$ and $e_b = \bigl((-\tfrac{1}{2},0,k)(0,-\tfrac{1}{2},k)\bigr)$.
The sets $V, \delta V, \delta_0, \delta_1 = \delta_1^+ \cup \delta_1^-$ and $\delta_2$ (resp. $\overline{V}, \overline{\delta V}, \overline{\delta_1}=\overline{ \delta_1^+} \cup \overline{\delta_1^-}$ and $ \overline{\delta_2}$) are defined as in previous proof.

\par Then, \eqref{parobssum} enables us to write exactly the same equation as in \eqref{lemq3par}
\be \label{lemq4par}
 \bigl| \sum_{e \in \delta_1 \cup \delta_0} \eta_V(e) \hat{F}_{S'_0}(e)
 +  \sum_{e \in \overline{\delta_1} \cup \delta_0} \eta_{\overline V}(e) \hat{F}_{\overline {S'_0}}(e) \bigr| 
 \le \bigl|\sum_{e \in \delta_2} \eta_V(e) \hat{F}_{S_0'}(e)  \bigr| + 
\bigl| \sum_{e \in \overline{\delta_2}} \eta_{\overline V}(e) \hat{F}_{\overline {S_0'}}(e)  \bigr|.
\ee

As in  \eqref{leftC3}, the right-hand side of \eqref{lemq4par} is bounded as follows:
\be \label{leftC4}
\bigl|\sum_{e \in \delta_2} \eta_V(e) \hat{F}_{S_0'}(e)  \bigr| + \bigl|
  \sum_{e \in \overline{\delta_2}} \eta_{\overline V}(e) \hat{F}_{\overline {S_0'}}(e)  \bigr|
\le 4 \!\!\! \sum_{x \in \d S_0}\!\! \phi_{S,p_{sd},q}^0 (0 \arr x).
\ee

The left-hand side of \eqref{lemq4par} can be written as:
\begin{align}
\bigl|
\sum_{e \in \delta_1 \cup \delta_0} \eta_V(e) \hat{F}_{S'_0}(e) 
&+ \sum_{e \in \overline{\delta_1} \cup \delta_0} \eta_{\overline V}(e) \hat{F}_{\overline {S'_0}}(e)
\bigr|
= \bigl| 2(e^{i \hs (4\pi k\frac{3 \pi}{2})} -1) \nonumber \\ \nonumber
&+ \sum_{x \in \d^+_{slit}\cup \d^-_{slit}}\phi^0_{S'_0, p_{sd}, q}(x \arr 0) (e^{-\pi i \hs}-e^{-i\hs \frac{\pi}{2}} + e^{4k\pi i \hs }(e^{2 \pi i \hs}-e^{i\hs \frac{5\pi}{2}})) 
\bigr|
\\ \nonumber
&= \bigl| 2(e^{i \hs (4\pi k\frac{3 \pi}{2})} -1) 
\\ \nonumber
& + 
\sum_{x \in \d^+_{slit}\cup \d^-_{slit}} 2 \phi^0_{S'_0, p_{sd}, q}(x \arr 0) (e^{i \hs(2 \pi k +\frac{\pi}{2})} - e^{(2k+1)\pi i \hs}) \cos((4\pi k +3)\tfrac{\hs}{2})
\bigr|
\\ \nonumber
&= 2 \bigl| (e^{(2k+1)\pi i \hs}- e^{i \hs(2 \pi k +\frac{\pi}{2})}) 
\bigr|
\\ \nonumber
&\bigl| 
1+ \sum_{m = 1}^{4k+1} 2 \cos (\tfrac{\pi m}{2}\hs) - \sum_{x \in \d^+_{slit}\cup \d^-_{slit}} 2 \phi^0_{S'_0, p_{sd}, q}(x \arr 0) \cos((4\pi k +3)\tfrac{\hs}{2})
\bigr|
\\ \nonumber
&\ge 2 \bigl| (e^{i \hs \tfrac{\pi}{2}}- 1) 
\bigr|.
\end{align}

The last bound holds if we  pick an integer $k$ in  such a way that 
$\cos (\tfrac{\pi m}{2}\hs) \ge 0$ for any integer $m \in [0, 4k+1]$ and
$\cos ((4k+3) \tfrac{\pi }{2}\hs) \le 0$.
These inequalities give the constraint
\benn
k \in \biggl[\frac{\tfrac{2}{\hs}-3}{2}, \frac{\tfrac{2}{\hs}-1}{2}\biggr].
\eenn

The length of this interval is equal to one so such $k$ can always be found.



\end{proof}
\begin{proof} [Proof of Lemma \ref{q4phiC1}]
Look at the exploration path $\gamma$ from $(0,0)$ to $(n,0)$ in the strip $S_n$ with $0 \backslash 1$ boundary conditions. 
Take $S$ such that $\d^-S_n \subset S \subset S_n$ and its reflection $\overline {S}$ from the middle line of the strip. We call $V$ (correspondingly $\overline{V}$) the vertices of  $S^\diamond$ (correspondingly $\overline {S^{\diamond}}$), and $\delta_b^-, \delta_t^-, \delta_{P^*}$ and $\delta_{\overline{P^*}}$ the medial edges corresponding to the bottom and top left boundaries of $S_n$ and to the paths $P^*$ and $\overline{P^*}$.

The equation \eqref{parobssum} can be written as

\be \label{lemq4Spar}
 \bigl| \sum_{e \in \delta_b^- \cup \delta_t^-} \eta_V(e) \hat{F}_{S}(e)+ \eta_{\overline V}(e) \hat{F}_{\overline {S}}(e) \bigr| 
 \le \bigl|\sum_{e \in \delta_{P^*}} \eta_V(e) \hat{F}_{S}(e)  \bigr| + 
\bigl| \sum_{e \in \delta_{\overline{P^*}}} \eta_{\overline V}(e) \hat{F}_{\overline {S}}(e)  \bigr|.
\ee

The right part of the inequality is written as

\begin{align}
\bigl|\sum_{e \in \delta_{P^*}} \eta_V(e) \hat{F}_{S}(e)  \bigr| + 
\bigl| \sum_{e \in \delta_{\overline{P^*}}} \eta_{\overline V}(e) \hat{F}_{\overline {S}}(e)  \bigr|
&=
2\bigl|\sum_{e \in \delta_{P^*}} \eta_V(e) \hat{F}_{S}(e)  \bigr| 
\nonumber \\ \nonumber 
&\le
2\sum_{e \in \delta_{P^*}}  \phi^{0 \backslash 1}_{S_n, p_{sd}, q} (e \in \gamma) 
\nonumber \\ \nonumber 
&= 4 \sum_{e \in P^*} \phi^{0 \backslash 1}_{S_n, p_{sd}, q} (e \arr \d^-S_n) 
\nonumber \\ \nonumber 
&= 8 \sum_{e \in P^*}\phi^{0 \backslash 1}_{S_n, p_{sd}, q}  (e \arr \d^-_bS_n).
\end{align}
due to the symmetry of $S$ and $\overline{S}$. Also, because of this symmetry, the left-hand side of \eqref{lemq4Spar} is bounded by

\begin{align}
\bigl| 
(e^{i\tfrac{\pi}{2} \hs}-1 +e^{-i\tfrac{\pi}{2} \hs} -e^{-i\pi \hs}) 
\!\!\! \sum_{x \in \d^-S_n} \!\!\phi^{0 \backslash 1}_{S_n, p_{sd}, q} (x \overset{*} \arr P^*)
\bigr|
& =
\tfrac{\sqrt{q}}{2} \bigl| 1 - e^{-i\tfrac{\pi}{2} \hs}\bigr| 
\!\!\! \sum_{x \in \d^-S_n}\!\! \phi^{0 \backslash 1}_{S_n, p_{sd}, q}(x \overset{*} \arr P^*)
\nonumber \\ \nonumber &=
\sqrt{q}\bigl| 1 - e^{-i\tfrac{\pi}{2} \hs}\bigr| 
\!\!\!\sum_{x \in \d^-_bS_n} \!\! \phi^{0 \backslash 1}_{S_n, p_{sd}, q} (x \overset{*} \arr P^*).
\end{align}
The combination of these two bounds finishes the proof.
\end{proof}

 \section{Proof of Lemma \ref{Uinfcl}} \label{ProofUinfcl}

To prove this lemma we introduce new definitions and prove one intermediate lemma. 
For every integer $r$, call $\ell_r$ the axis in $\U$ obtained by the rotation of $\ell_0~=~\{(0,y,0)~:~y~\ge~0\}$ by the angle $\tfrac{r \pi}{2}$. 
Let us denote $\mathrm{sect}_{i, j}$ the part of $\U$ between $\ell_i$ and $\ell_{j}$ 
(in particular $\U_k~=~\mathrm{sect}_{-2-4k,4k+2}$). 

\par Let us call the domain $\Omega \subset \U$ {\it{symmetric}} if it is invariant under the reflection with respect to $\ell_0$, i.e, if $(x,y,z) \in \Omega$ implies that $(x,-y,-z) \in \Omega$. 
Let us also call $\Omega$ {\it{simple}} if $0 \in \Omega$ and if for any sector $\mathrm{sect}_{i, i+1}$ the domain $\mathrm{sect}_{i, i+1} \backslash \Omega$ is connected.

\begin{lemma} \label{prop1}
For any $k >0$ and any
simple symmetric domain $\Omega$ such that $ \Omega \subset \mathrm{sect}_{-k,k}$ 
the following is true:
\be
\phi^{0,0\backslash1}_{\mathrm{sect}_{-k,k} \backslash \Omega, p_{sd},q} \bigl(\ell_1 \overset{*}{\underset{\mathrm{sect}_{-1,1}}\arr} (\d \Omega \cap \mathrm{sect}_{-1, 0}) \bigr) \ge \frac{1}{2}.
\ee
where $0,0\backslash1$ boundary conditions denote free boundary conditions at infinity, $\ell_k$ and $\ell_{-k}$, free boundary conditions on $\mathrm{sect}_{-k,0}\cap \d \Omega$ and wired boundary conditions on $\mathrm{sect}_{0,k}\cap \d \Omega$.
\end{lemma}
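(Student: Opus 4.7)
Let me denote the measure in the statement by $\mu := \phi^{0,0\backslash 1}_{\mathrm{sect}_{-k,k}\backslash\Omega, p_{sd},q}$, and let $A$ be the event appearing in the lemma. Introduce the auxiliary event
\[
B := \bigl\{ \ell_{-1} \underset{\mathrm{sect}_{-1,1}}{\arr} (\d\Omega \cap \mathrm{sect}_{0,1}) \bigr\}.
\]
The plan is to combine planar duality inside $\mathrm{sect}_{-1,1}\backslash\Omega$, the self-duality of the random-cluster measure at $p_{sd}$, and the reflection symmetry of $\Omega$ and $\mathrm{sect}_{-k,k}$ through $\ell_0$, in three steps.

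\emph{Step 1 (planar duality: $A = B^c$).} Simpleness of $\Omega$ together with $0\in\Omega$ implies that $\mathrm{sect}_{-1,1}\backslash\Omega$ is topologically a disk (after one-point compactification of infinity), whose boundary consists of four arcs in the cyclic order $\ell_{-1}$, $\ell_1$, $\d\Omega\cap\mathrm{sect}_{0,1}$, $\d\Omega\cap\mathrm{sect}_{-1,0}$. Standard planar duality then identifies $A$ (a dual connection between arcs $2$ and $4$ inside the region) with the complement of $B$ (a primal connection between arcs $1$ and $3$). Hence $\mu(A)+\mu(B)=1$.

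\emph{Step 2 (duality $\circ$ reflection: $\mu(A)=\nu(B)$).} Let $R$ be the reflection of $\U_k$ through $\ell_0$, and let $D$ be the dualization $\omega\mapsto\omega^*$, using the isomorphism $\U_k^*\cong\U_k$ by the usual half-edge shift. Both are configuration-space bijections; at $p_{sd}$, $D_*\mu$ is again a random-cluster measure, with free boundary conditions replaced by wired and vice versa. Thus $D_*\mu$ has boundary conditions (wired, wired, wired, free) on (infinity, $\ell_{\pm k}$, $\d\Omega\cap\mathrm{sect}_{-k,0}$, $\d\Omega\cap\mathrm{sect}_{0,k}$); applying $R_*$, which swaps the two halves of $\d\Omega$, gives a measure $\nu:=(R\circ D)_*\mu$ with boundary conditions (wired, wired, free, wired). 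On events, $D$ turns dual connections into primal connections, while $R$ sends $\ell_1\to\ell_{-1}$ and $\d\Omega\cap\mathrm{sect}_{-1,0}\to\d\Omega\cap\mathrm{sect}_{0,1}$, so $(R\circ D)(A)=B$. Therefore $\mu(A)=\nu(B)$.

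\emph{Step 3 (monotonicity and conclusion).} $\mu$ and $\nu$ agree on $\d\Omega$, but $\nu$ is wired where $\mu$ is free (at infinity and on $\ell_{\pm k}$). By the comparison between boundary conditions \eqref{boundcond}, $\nu$ stochastically dominates $\mu$, and since $B$ is increasing we get $\nu(B)\ge\mu(B)$. Combining:
\[
\mu(A)=\nu(B)\ge\mu(B)=1-\mu(A),
\]
so $\mu(A)\ge\tfrac{1}{2}$. The main technical obstacle is setting up the planar duality on the universal cover compatibly with $R$: one needs the identification $\U_k^*\cong\U_k$ to respect reflection through $\ell_0$, so that the images of $\ell_1$ and of $\d\Omega\cap\mathrm{sect}_{-1,0}$ under $D$ are as claimed (up to the half-edge shift). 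The infinite extent of $\ell_{\pm 1}$ and of $\d\Omega$ also requires a routine truncation-and-limit step, both for the planar-duality complement in Step~1 and for the pushforward of the measure in Step~2.
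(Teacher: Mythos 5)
Your Steps 2 and 3 are exactly the paper's opening move (duality turns the free/wired split on $\d\Omega$ into its mirror image, reflection through $\ell_0$ maps it back, and comparison of boundary conditions gives $\mu(A)\ge\mu(B)$ for $B=\{\ell_{-1}\underset{\mathrm{sect}_{-1,1}}\arr(\d\Omega\cap\mathrm{sect}_{0,1})\}$), and that part is fine. The gap is Step 1: in this unbounded region the exact complementarity $\mu(A)+\mu(B)=1$ is not true, and the one-point compactification does not rescue it, because neither primal nor dual paths can pass ``through'' the point at infinity sitting between $\ell_{-1}$ and $\ell_1$. The correct exhaustive decomposition is $A\cup B\cup(B_1\cap B_2)$, where
\begin{equation*}
B_1=\bigl\{(\d\Omega\cap\mathrm{sect}_{0,1})\underset{\mathrm{sect}_{-1,1}\backslash\Omega}\arr\infty\bigr\},\qquad
B_2=\bigl\{(\d\Omega\cap\mathrm{sect}_{-1,0})\overset{*}{\underset{\mathrm{sect}_{-1,1}\backslash\Omega}\arr}\infty\bigr\},
\end{equation*}
i.e.\ the configuration may contain an infinite primal cluster hanging off the wired part of $\d\Omega$ together with an infinite dual cluster hanging off the free part, in which case neither $A$ nor $B$ occurs. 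Your ``routine truncation-and-limit'' remark is precisely where this hides: in a finite truncation the complementary crossing is allowed to use the truncation boundary, and in the limit that contribution is exactly $\mu(B_1\cap B_2)$; nothing routine makes it vanish.

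Showing $\mu(B_1\cap B_2)=0$ is in fact the bulk of the paper's proof, and it is genuinely delicate because the ambient graph is the (truncated) universal cover, where one cannot simply quote absence of an infinite cluster at $p_{sd}$. The paper proceeds by finite energy to force both infinite clusters through $0$ (the event $\tilde B$), a conditioning/boundary-comparison step that bounds $\mu(B_1\cap B_2)^2$ by the probability of coexisting infinite primal and dual clusters in a four-sector piece, and then glues the two bounding axes of that piece together to land in $\Z^2$, where Zhang's argument gives $\phi^0_{\Z^2,p_{sd},q}(0\arr\infty)=0$. Without this ingredient your chain $\mu(A)=\nu(B)\ge\mu(B)=1-\mu(A)$ only yields $\mu(A)\ge\tfrac12\bigl(1-\mu(B_1\cap B_2)\bigr)$, which is short of the lemma.
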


\begin{figure}[h!]
\centering
\includegraphics[width=\textwidth]{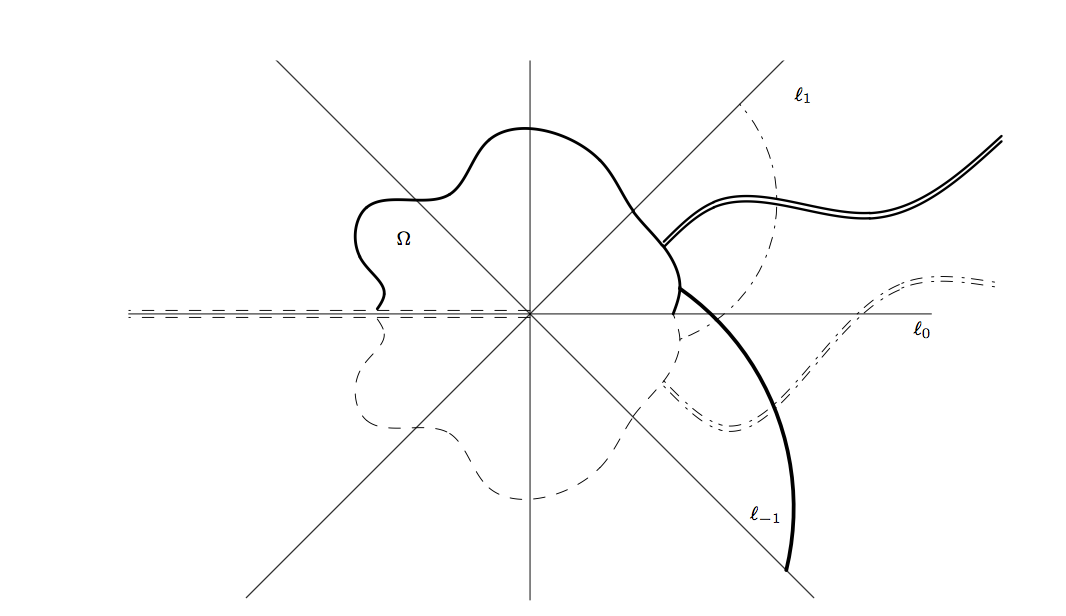}
\caption{A domain $\Omega$ with corresponding boundary conditions, events $A_1$ and $A_2$, and $B_1$ and $B_2$ (doubled lines). Dual paths are represented by dash-dotted lines.}
\label{figprop1}
\end{figure}

\begin{proof}
The events 
$$A_1 = \{\ell_1 \overset{*}{\underset{\mathrm{sect}_{-1,1}\backslash \Omega}\arr} (\d \Omega \cap \mathrm{sect}_{-1,0}) \}$$
and 
$$A_2 = \{\ell_{-1} {\underset{\mathrm{sect}_{-1,1}\backslash \Omega}\arr} (\d \Omega \cap \mathrm{sect}_{0, 1}) \}$$
are disjoint. 
Let us denote $1,0\backslash1$ the boundary conditions which are wired at infinity, $\ell_k$ and $\ell_{-k}$, free at $\mathrm{sect}_{-k,0}\cap \d \Omega$ and wired at $\mathrm{sect}_{0,k}\cap \d \Omega$.
Then, by duality and by comparison between free and wired boundary conditions (which favour primal paths to appear), 
$$\phi^{0,0\backslash1}_{\mathrm{sect}_{-k,k} \backslash \Omega, p_{sd},q}(A_1) = 
\phi^{1,0\backslash1}_{\mathrm{sect}_{-k,k} \backslash \Omega, p_{sd},q}(A_2) \ge
\phi^{0,0\backslash1}_{\mathrm{sect}_{-k,k} \backslash \Omega, p_{sd},q}(A_2).
$$
Let us look at the events
$$B_1 = \{ (\d \Omega \cap \mathrm{sect}_{0, 1}){\underset{\mathrm{sect}_{-1,1}\backslash \Omega}\arr} \infty \}$$
and
$$B_2 = \{ (\d \Omega \cap \mathrm{sect}_{-1,0})\overset{*}{\underset{\mathrm{sect}_{-1,1}\backslash \Omega}\arr} \infty \}.$$
The realisations of $A_2$ and $B_1$ are the only paths blocking the event $A_1$ (see Figure \ref{figprop1}). 

Thus, $A_1$, $A_2$ and $B_1 \cap B_2$ are disjoint, and moreover, $A_1 \cup A_2 \cup (B_1 \cap B_2)$ is equal to the full probability space:
\be \label{fullspace}
\phi^{0,0\backslash1}_{\mathrm{sect}_{-k,k} \backslash \Omega, p_{sd},q}(A_1) +
\phi^{0,0\backslash1}_{\mathrm{sect}_{-k,k} \backslash \Omega, p_{sd},q}(A_2) +
\phi^{0,0\backslash1}_{\mathrm{sect}_{-k,k} \backslash \Omega, p_{sd},q}(B_1 \cap B_2)  = 1
\ee

Let us bound the probability of $ B_1 \cap B_2$. We can compare it to the event 
\begin{align} 
\tilde{B} = \{ 
0 \overset{*}{\underset{\mathrm{sect}_{-1,1}}\arr} \infty, 
0 {\underset{\mathrm{sect}_{-1,1}}\arr} \infty
&\text{ and  the dual cluster comes before the primal one} 
\nonumber\\ \nonumber 
&\text{ on the way from $\ell_{-1}$ to $\ell_1$}\}.
\end{align}
We can open all edges of $\Omega \cap \mathrm{sect}_{0,1}$ and close all edges of $\Omega \cap \mathrm{sect}_{-1,0}$, this connects the primal cluster from the event $B_1$ and the dual cluster from the event $B_2$ to zero. 
Then by finite energy property there exists a positive constant $c$ depending only on $\Omega$, such that
\begin{align}
c \phi^{0,0\backslash1}_{\mathrm{sect}_{-k,k} \backslash \Omega, p_{sd},q}(B_1 \cap B_2) 
&\le  \phi^{0}_{\mathrm{sect}_{-k,k}, p_{sd},q}(B_1 \cap B_2 \cap 
\{(\Omega \cap \mathrm{sect}_{-1,0}) \text{ closed, }(\Omega \cap \mathrm{sect}_{-1,0}) \text{ open}\})
\nonumber \\ \nonumber
& \le  
 \phi^{0}_{\mathrm{sect}_{-k,k}, p_{sd},q}(\tilde{B}).
\end{align}

From now on the proof will require that $k \ge 3$, but it can be easily modified for $k \in \{1,2\}$.

\par Let us look at the probability of the event 
$$C = \{0 \overset{*}{\underset{\mathrm{sect}_{1,3}}\arr} \infty\} $$ 
conditioned on $\tilde{B}$, 
and compare it to the event $\tilde{B}$ itself. 
The existence of the primal cluster from $0$ to infinity in $\mathrm{sect}_{-1,1}$ has less influence on $C$, 
than a primal cluster  from $0$ to infinity in $\mathrm{sect}_{1,-3}$ (closer to $\ell_{1}$, than the dual cluster).
The free boundary conditions at $\ell_{k}$ are two $\tfrac{\pi}{2}$-turns closer to the dual cluster from $C$ than the free boundary conditions at $\ell_{-k}$ for the dual cluster of $\tilde{B}$.
The comparison between boundary conditions concludes that
$$
\phi^0_{\mathrm{sect}_{-k,k}, p_{sd},q} (C | \tilde{B}) \ge
\phi^0_{\mathrm{sect}_{-k,k}, p_{sd},q} (0 \overset{*}{\arr} \infty \text{ in } \tilde{B}|0 \arr \infty \text{ in } \tilde{B}) \ge
\phi^0_{\mathrm{sect}_{-k,k}, p_{sd},q} (\tilde{B}),
$$
and, by comparison between boundary conditions,
$$
\phi^0_{\mathrm{sect}_{-3,1}, p_{sd},q} (C \cap \tilde{B}) \ge
\phi^0_{\mathrm{sect}_{-k,k}, p_{sd},q} (C \cap \tilde{B}) \ge
\bigr(\phi^0_{\mathrm{sect}_{-k,k}, p_{sd},q} (\tilde{B})\bigl)^2.
$$
Let us look at the existence of the primal infinite cluster in $C \cap \tilde{B}$ conditioned on the existence of two separated infinite dual clusters (let us call this event $(0 \overset{*}{\arr} \infty)^2$). The probability of this event will not change if the boundaries $\ell_{-1}$ and $\ell_3$ are glued together to obtain $\Z^2$. The probability for primal infinite cluster to exist increases if we remove two dual clusters. Thus,
\begin{align}
\bigr (c \phi^{0,0\backslash1}_{\mathrm{sect}_{-k,k} \backslash \Omega, p_{sd},q}(B_1 \cap B_2) \bigl)^2
&\le \phi^0_{\mathrm{sect}_{-3,1}, p_{sd},q} (C \cap \tilde{B}) \nonumber \\
&\le \phi^0_{\mathrm{sect}_{-3,1}, p_{sd},q} (0 {\arr} \infty \text{ in } C \cap \tilde{B}|(0 \overset{*}{\arr} \infty)^2 \text{ in } C \cap \tilde{B})  \nonumber \\ \nonumber
&\le \phi^0_{\Z^2, p_{sd},q} (0 {\arr} \infty \text{ in } C \cap \tilde{B}|(0 \overset{*}{\arr} \infty)^2 \text{ in } C \cap \tilde{B})\\ \nonumber
& \le \phi^0_{\Z^2, p_{sd},q} (0 \arr \infty),
\end{align}
and the last probability is equal to zero because of Zhang's argument. 
Combined with \eqref{fullspace}, this implies the result.

\end{proof}

\begin{proof}[Proof of Lemma \ref{Uinfcl}]

\par We are going to prove that, with positive probability, there exists a dual path in $\U_k \backslash \L_{n,k}$ disconnecting $0$ from infinity and that this probability does not depend on $n$. This fact implies the statement of the proposition. 

Let us look at the event $\{ \ell_{-4k-2} \overset{*}{\underset{\U_k \backslash \L_{n,k}}\arr} \ell_{-4k-1}\}$ not in $\U_k = \mathrm{sect}_{-4k-2, 4k+2}$, but in a bigger domain $\mathrm{sect}_{-12k-6, 4k+2}$.
The domain $\Omega = \L_{n,3k+1} \cap \mathrm{sect}_{-12k-6,4k+2}$ is simple and symmetric with respect to the symmetry line $\ell_{-4k-2}$.
Putting free boundary conditions on $\ell_{-12k-6}$ and on $\d \Omega \cap \mathrm{sect}_{-12k-6,-4k-2} $ instead of $\ell_{-4k-2}$ decreases the probability of dual path to appear. Then, by comparison between boundary conditions



\begin{align}
\phi^{0}_{\U_k,p_{sd},q} (\ell_{-4k-2} & \overset{*}{\underset{\U_k \backslash \L_{n,k}}\arr} \ell_{-4k-1}) \nonumber
\\ & \ge \phi^{0}_{\mathrm{sect}_{-12k-6,4k+2},p_{sd},q} 
(\ell_{-4k-2} \!\!\!\!\! \overset{*}{\underset{\mathrm{sect}_{-12k-6,4k+2} \backslash \Omega}\arr} \!\!\!\!\! \ell_{-4k-1}| \Omega \cap \mathrm{sect}_{-12k-6,-4k-2} \text{ closed})  
\nonumber \\ \nonumber
& \ge \phi^{0,0\backslash1}_{\mathrm{sect}_{-12k-6,4k+2} \backslash \Omega,p_{sd},q} 
(\ell_{-4k-2} \overset{*}{\arr} \ell_{-4k-1})  
\ge \tfrac{1}{2},
\end{align}
where the last bound is a direct consequence of Lemma \ref{prop1}.


%
%
%
%

\begin{figure}[!h]
\centering
\includegraphics[width=0.85\textwidth]{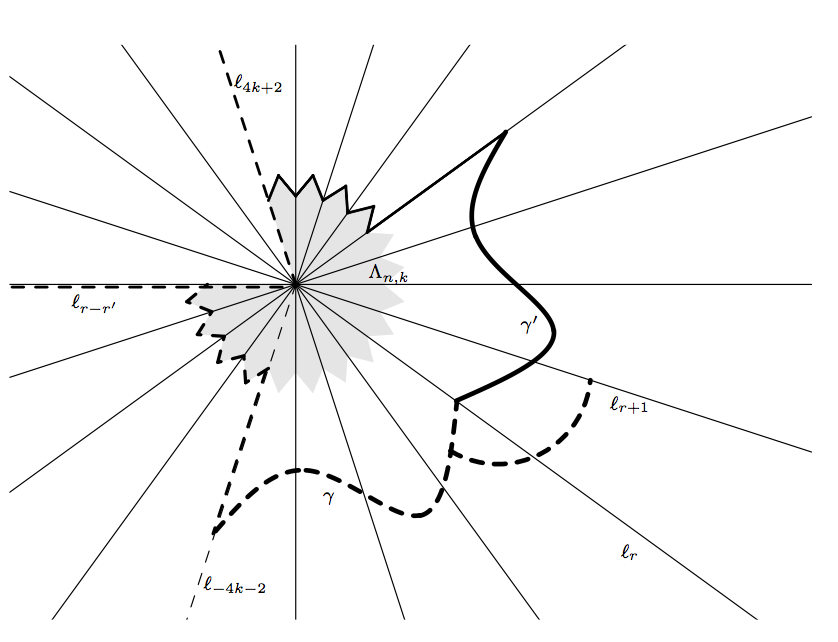}

\caption{Construction of the dual connection between $\gamma$ and $\ell_{r+1}$}
\label{figUinfcl}
\end{figure}

Suppose now that for some integer $r \in [-4k-1, 4k+1]$ lines $\ell_{-4k-2}$ and $\ell_r$ are already connected in $\U_k \backslash \L_{n,k}$ by a dual path $\gamma$ (see Figure \ref{figUinfcl}).
Consider $\ell_r$ as a symmetry line and reflect $\gamma$ with respect to it (let us call the result $\gamma'$). 
The domain $\Omega'$ defined as the area of $\mathrm{sect}_{-4k-2,2r+4k+2}$ bounded by $\g \cup \g'$ is a simple symmetric domain.
We are going to work in a sector $\mathrm{sect}_{r-r',r+r'}$, where $r' = 4k+2+|r|$. 
Then $\Omega = (\L_{n,3k+1} \cap \mathrm{sect}_{r-r',r+r'}) \cup \Omega'$ is also a simple symmetric domain and, using the same strategy as before, we obtain

\begin{align}
\phi^{0}_{\U_k,p_{sd},q} (\ell_{r} \overset{*}{\underset{\U_k \backslash \L_{n,k}}\arr} \ell_{r+1} \bigm| \gamma ) 
& \ge \phi^{0}_{\mathrm{sect}_{r-r',r+r'},p_{sd},q} 
(\ell_{r} \overset{*}{\underset{\mathrm{sect}_{r-r',r+r'} \backslash \Omega}\arr} \ell_{r+1}\bigm|
\Omega \cap \mathrm{sect}_{r-r',r} \text{ closed})  
\nonumber \\ \nonumber
& \ge \phi^{0,0\backslash1}_{\mathrm{sect}_{r-r',r+r'} \backslash \Omega,p_{sd},q} 
(\ell_{r} \overset{*}{\arr} \ell_{r+1})  
\ge \tfrac{1}{2}.
\end{align}

This implies the result since, by iterative conditioning, we find 
$$
\phi^{0}_{\U_k,p_{sd},q} (\ell_{-4k-2} \overset{*}{\underset{U_k \backslash \L_{n,k}}\arr} \ell_{4k+2}) \ge\bigr( \tfrac{1}{2}\bigl)^{8k+4}.
$$

\end{proof}

\section{Proof of Lemma \ref{inasquare}} \label{square}

\begin{proof}
Fix $k$  and $C$ as in Lemma \ref{q4phiC1} so that:
\benn
\sum_{x \in \d \L_{n,k}} \phi^0_{\L_{n,k},p_{sd},q} (0 \arr x) \ge C.
\eenn

Divide the boundary of $\L_{n,k}$ into $8(2k+1)$ pieces by splitting each side of each boundary layers into two halfs at the midpoint. Then, there exists at least one piece $I_{n,k}$ such that 
\be \label{1sq}
\sum_{x \in I_{n,k}} \phi^0_{\L_{n,k},p_{sd},q} (0 \arr x) \ge \frac{C}{17 k}.
\ee
By the finite energy property, there exists a constant $c' = c'(q,k,\xi) \ge 1$ such that for any $j \in \Z \cap [-k,k]$ and for any $x \in \U_k$ the following holds:
\be \label{fepsq}
\frac{1}{c'} \ge\frac{\phi^\xi_{\U_k,p,q}[x \arr (0,0,0)]}{\phi^\xi_{\U_k,p,q}[x \arr (0,0,j)]} \ge c'.
\ee
Combining \eqref{1sq} and \eqref{fepsq}, we obtain that for some positive constant $c$, independent of $n$,
\benn
\sum_{x \in  I_{n,k}} \phi^0_{\L_{n,k},p_{sd},q} [(0,0,j) \arr x] \ge c,
\eenn
where $j$ is the height of $I_{n,k}$. 
Using a bigger domain with $2k$ layers  to centre the $j$-th layer and comparing boundary conditions as in \eqref{boundcond1}, we conclude that
\benn
\sum_{x \in  I_{n}} \phi^0_{\L_{n,2k},p_{sd},q} [0 \arr x] \ge c,
\eenn
where $I_n$ is the projection of $I_{n,k}$ onto the layer of height $0$. 
\par Let us open all edges in a smaller box $\L_{R, 2k}$ for some $R \in (0, N)$. Then, we can write the following inequality
\begin{align}
\sum_{x \in  I_{n}} \phi^0_{\L_{n,2k},p_{sd},q} [x \arr \L_{R,2k}| \L_{R, 2k} \text{ is all open}] 
&\ge \frac{\sum_{x \in  I_{n}} \phi^0_{\L_{n,2k},p_{sd},q} [0 \arr x] }{ \phi^0_{\L_{n,2k},p_{sd},q} [ 0 \arr \d \L_{R,2k}]} \nonumber \\
& \ge \frac{c }{{ \phi^0_{\L_{n,2k},p_{sd},q} [ 0 \arr \d \L_{R,2k}]}}. \nonumber
\end{align}
Also, we can choose any $\gamma$ from $\d \L_{n}$ to $\d \L_R$, project it on all layers of $\U_k$ as $\tilde \gamma$ and make $\tilde \gamma$ open. Then, we obtain 
\begin{align}
\sum_{x \in  I_{n}} \phi^\xi_{\L_{n,2k}\backslash(\L_{R, 2k}( \tilde\gamma)),p_{sd},q} [x \arr \L_{R,2k}( \tilde \gamma)] 
&= \sum_{x \in  I_{n}} \phi^0_{\L_{2n,k},p_{sd},q} [x \arr \L_{R,2k}| \L_{R, 2k} \cup \tilde\gamma \text{ are open}] \nonumber \\
& \ge \frac{c }{{ \phi^0_{\L_{n,2k},p_{sd},q} [ 0 \arr \d \L_{R,2k}]}}. \nonumber
\end{align}
where $\L_{R,2k}( \tilde \gamma)$ is a union of $\L_{R,2k}$ and $\tilde \gamma)$, and $\xi$ denotes the boundary conditions wired on $ \d \L_{R, 2k}$ and $\tilde\gamma$ and free on $\L_{n,2k}$. Notice that $\{x: x \underset{\L_{n,2k}\backslash(\L_{R, 2k}\cup \tilde\gamma)}\arr  I_{n}\}$ can be projected on $\Z^2$ without multiple projections on one point (if not, an open path between two points with the same projection would cross $\tilde \gamma$). Thus, we can conclude that
\be \label{inasquare1}
\sum_{x \in  I_{n}} \phi^\xi_{\L_n \backslash(\L_{R}\cup \gamma),p_{sd},q} [x \arr \L_{R}\cup \gamma]   \ge \frac{c }{{ \phi^0_{\L_{n,2k},p_{sd},q} [ 0 \arr \d \L_{R,2k}]}}. 
\ee
By Lemma \ref{Uinfcl}, for any $\e > 0$, one may choose $R$ large enough that for $n$ large enough 
\benn
\phi^0_{\L_{n,2k},p_{sd},q} [ 0 \arr \d \L_{R,2k}] < \e.
\eenn
Together with \eqref{inasquare1}, this gives the result.

\end{proof}

\section{Proof of Lemma \ref{twocases}} \label{twoc}
Let us call $p_{m,n} = p_{m,n}(C)$ the probability that the box $[-m,m] \times [0, Cm] \in \H$ with wired boundary conditions on $\partial S_n$ is crossed from top to bottom, i.e.
\be
p_{m,n}  = \phi^1_{S_n, p_{sd},q} ([-m,m] \times \{0\} \underset{[-m,m] \times [0, Cm]} \arr [-m,m] \times \{ Cm\}).
\ee

Let us fix $C \in \N$ and take $k, m, n \in \N$ such that $ 2 C k m < n$. For $i \ge 0$, define the domains $L^b_i$ and $L_i^t$ as follows:
\beenn
&&L_0^b = {[-m,m] \times [0, Cm]}, \\
&&L_0^t = {[-m,m] \times [n -Cm,n]}, \\
&& L_i ^b  = L_{i-1}^b \cup \bigl( L_{i-1}^b + (-2m, 0) \bigr) \cup  \bigl( {[-m,m] \times [0, Cm]} + (-im, iCm)\bigr), \\
&& L_i ^t = L_{i-1}^t \cup \bigl( L_{i-1}^t + (-2m, 0) \bigr) \cup  \bigl( {[-m,m] \times [n -Cm,n]} + (-im, -iCm)\bigr).
\eeenn
For $i \le k$, the domains $L_i^t$ and $L_i^b$ stay in the strip $S_n$ and do not intersect.
We are going to study the events 
\beenn 
&& A_{m,n,i}^b = \bigl \{   [0,m]\times \{0\} \underset{L_i^b} \arr \{-(i+1)m\}\times  \Z \bigr \},\\
&& A_{m,n,i}^t = \bigl \{   [0,m]\times \{n\} \underset{L_i^t} \arr \{-(i+1)m\}\times  \Z \bigr \}.
\eeenn 

\begin{lemma}
For every $k,m,n \in \N$  such that $ 2 C k m < n$, we have that
\be
\phi^{0\backslash 1}_{S_n, p_{sd},q} (A_{m,n,k}^b \cap A_{m,n,k}^t) \ge (\tfrac{1}{2} - p_{m,  n- 2Ckm})^{2(k+1)}.
\ee
\end{lemma}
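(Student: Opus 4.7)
The plan is to combine the FKG inequality with the Domain Markov property to reduce the intersection event to a product of $2(k+1)$ local crossing events, each controlled via self-duality at $p_{sd}$. Since both $A^b_{m,n,k}$ and $A^t_{m,n,k}$ are increasing, the FKG inequality first yields
\benn
\phi^{0\backslash 1}_{S_n,p_{sd},q}(A^b_{m,n,k}\cap A^t_{m,n,k}) \ge \phi^{0\backslash 1}_{S_n,p_{sd},q}(A^b_{m,n,k})\cdot \phi^{0\backslash 1}_{S_n,p_{sd},q}(A^t_{m,n,k}),
\eenn
so it suffices to bound each factor below by $(\tfrac{1}{2}-p_{m,n-2Ckm})^{k+1}$.

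Next, I would observe that $L_k^b$ contains a diagonal staircase of $k+1$ translated copies of the base rectangle, namely $R_i:=[-m,m]\times[0,Cm]+(-im,iCm)$ for $i=0,\dots,k$, with consecutive $R_i,R_{i+1}$ sharing an edge of length $m$. The event $A^b_{m,n,k}$ is realised as soon as, for each $i\in\{0,\dots,k\}$, a primal path crosses $R_i$ from its bottom-right overlap to its top-left overlap (with $[0,m]\times\{0\}$ playing the role of the input for $i=0$ and the target $\{-(k+1)m\}\times\Z$ containing the output for $i=k$). Peeling off one rectangle at a time, the Domain Markov property and the monotonicity in boundary conditions \eqref{boundcond} (the explored open cluster can only improve the boundary conditions for the next step) together with FKG produce
\benn
\phi^{0\backslash 1}_{S_n,p_{sd},q}(A^b_{m,n,k}) \ge \prod_{i=0}^k q_i^b,
\eenn
where $q_i^b$ is the probability of the $i$-th step crossing inside $R_i$ under boundary conditions at least as favourable as wired.

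The heart of the argument is then the local bound $q_i^b \ge \tfrac{1}{2}-p_{m,n-2Ckm}$, which I would derive using self-duality at $p_{sd}$. The failure of the required primal crossing of $R_i$ forces, by planar duality, a dual path crossing $R_i$ in the orthogonal direction; at $p_{sd}$ this dual crossing has the same law as a primal vertical crossing of an $R_i$-shaped rectangle. Using \eqref{boundcond1} and the fact that the two staircases $L_k^b, L_k^t$ leave at least $n-2Ckm$ of vertical space in the middle of $S_n$, this vertical crossing probability is bounded by $p_{m,n-2Ckm}$. The missing $\tfrac{1}{2}$ comes from the symmetric structure at $p_{sd}$: on a square-shaped rectangle with balanced boundary conditions, both orthogonal crossings have probability at least $\tfrac{1}{2}$, and the non-square aspect of $R_i$ is accounted for by the correction term. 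An analogous bound holds for $A^t_{m,n,k}$, and multiplying the $2(k+1)$ factors gives the claimed inequality.

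I expect the third step to be the main obstacle. Steps one and two (FKG and the staircase decomposition) are routine once the geometry of $L_k^b$ is made explicit, but turning the self-duality identity into the precise constant $\tfrac{1}{2}-p_{m,n-2Ckm}$ requires careful tracking of the orientation of each $R_i$, of the induced boundary conditions after the exploration, and of the comparison between $\phi^{0\backslash 1}_{S_n}$ and the measure $\phi^1_{S_{n-2Ckm}}$ used in the definition of $p_{m,n-2Ckm}$.
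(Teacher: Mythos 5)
Your overall scaffolding (FKG to combine events, peel the domain step by step via the Domain Markov property) is in the spirit of the paper, but the core of your argument --- the per-rectangle bound $q_i^b \ge \tfrac{1}{2}-p_{m,n-2Ckm}$ --- does not hold, because you have the geometry of the event backwards. With your choice of overlaps, crossing $R_i$ from $R_i\cap R_{i-1}$ (the right half of its bottom side) to $R_i\cap R_{i+1}$ (the left half of its top side) is a crossing of a $2m\times Cm$ rectangle in the \emph{long} (vertical) direction; since $C\ge 4$ in the application, this is the ``hard'' crossing, whose probability is of the order of $p_{m,\cdot}$ itself. Indeed, your event implies a top-to-bottom crossing of a translate of $[-m,m]\times[0,Cm]$, so after comparison with wired boundary conditions on a narrowed strip its probability is at most roughly $p_{m,n-2Ckm}$; your claimed lower bound would then force $p_{m,n-2Ckm}\ge\tfrac14$, which is false precisely at the scales (the set $K_n$ in Lemma \ref{case1pf}, where $p\le\tfrac14$) at which this lemma is used. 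The duality step you invoke also does not produce the constant $\tfrac12$: failure of a hard-direction primal crossing only yields an easy-direction dual crossing, which has large probability, and the assertion that ``both orthogonal crossings of a square have probability at least $\tfrac12$ and the non-square aspect is absorbed in the correction term'' is exactly the kind of RSW-type crossing estimate that this paper deliberately avoids and that is not available here without further argument.

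What the paper's proof actually does is build the path \emph{leftward}, not up the staircase. In the base case, under the $0\backslash 1$ boundary conditions, reflection symmetry through the line $x=0$ combined with self-duality at $p_{sd}$ gives that $[0,m]\times\{0\}$ is connected inside $[-m,m]\times[0,Cm]$ to the union of the top and the left side with probability at least $\tfrac12$; subtracting the probability of reaching the top, which is at most the vertical crossing probability $p_{m,n}$, gives $\tfrac12-p_{m,n}$. In the inductive step one conditions on the uppermost path $\gamma_b$ (and the lowermost $\gamma_t$ for the top event), reflects it through the vertical line $\{-(i+1)m\}\times\Z$, puts free boundary conditions on the mirror image to restore the symmetry, and repeats the same argument: reach ``next vertical line $\cup$ top of the newly added box'' with probability at least $\tfrac12$, then bound the top alternative by the vertical crossing probability of that box, compared via Domain Markov and monotonicity --- using that both previously built paths lie outside the narrowed strip $\Z\times[(i+1)Cm,\,n-(i+1)Cm]$ --- to $p_{m,n-2C(i+1)m}\le p_{m,n-2Ckm}$; conditional FKG then couples the bottom and top extensions at each step. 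Thus the staircase boxes are never crossed vertically by the construction; their vertical crossings are exactly the error terms. Note also that the paper keeps $A^b$ and $A^t$ together in the induction partly because the narrowed-strip comparison uses both $\gamma_b$ and $\gamma_t$; your a priori FKG split discards this. The symmetric-domain/self-duality step at each stage is the essential content of the lemma, and your proposal does not supply it.
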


\begin{proof}
%
%

We prove the estimate by induction. The event $A_{m,n,0}^b$ is rewritten as follows:
$$ \bigl\{ ([0,m] \times \{0\} \underset{[-m,m] \times [0, Cm]} \arr \{-m\} \times [0, Cm]) \bigr\} $$.

Under $0 \backslash 1$ boundary conditions, the events
$$ \bigl\{ [0,m] \times \{0\} \underset{[-m,m] \times [0, Cm]} \arr ([-m,m] \times \{ Cm\} )\cup (\{-m\} \times [0, Cm] \bigr\} $$
and
$$\bigl\{[-m,0] \times \{0\} \overset{*} {\underset{[-m,m] \times [0, Cm]} \arr } ([-m,m] \times \{ Cm\}) \cup (\{m\} \times [0, Cm]) \bigr\} $$
have the same probability. Since by duality and symmetry at least one of them should occur in any configuration, we find that
$$\phi^{0\backslash 1}_{S_n, p_{sd},q}( [0,m] \times \{0\} \underset{[-m,m] \times [0, Cm]} \arr ([-m,m] \times \{ Cm\} )\cup (\{-m\} \times [0, Cm]) )\ge \frac{1}{2}.$$
Adding wired boundary conditions increase the probability to have a vertical crossing of the box so
$$ \phi^{ 0 \backslash 1}_{S_n, p_{sd},q} ([-m,m] \times \{0\} \underset{[-m,m] \times [0, Cm]} \arr [-m,m] \times \{ Cm\}) \le p_{m,n}$$
which implies that
$$\phi^{0\backslash 1}_{S_n, p_{sd},q}(A^b_{m,n,0}) = \phi^{0\backslash 1}_{S_n, p_{sd},q} ([0,m] \times \{0\} \underset{[-m,m] \times [0, Cm]}\arr \{-m\} \times [0, Cm]) \ge \frac{1}{2} - p_{m,n}.$$
The same estimation is true also for $A^t_{m,n,0}$. Then, the FKG inequality gives that 
$$
\phi^{0\backslash 1}_{S_n, p_{sd},q}(A^b_{m,n,0} \cap A^t_{m,n,0}) \ge \bigl( \frac{1}{2} - p_{m,n} \bigr)^2.
$$

\begin{figure}[!h]
\centering
\includegraphics[width=0.75\textwidth]{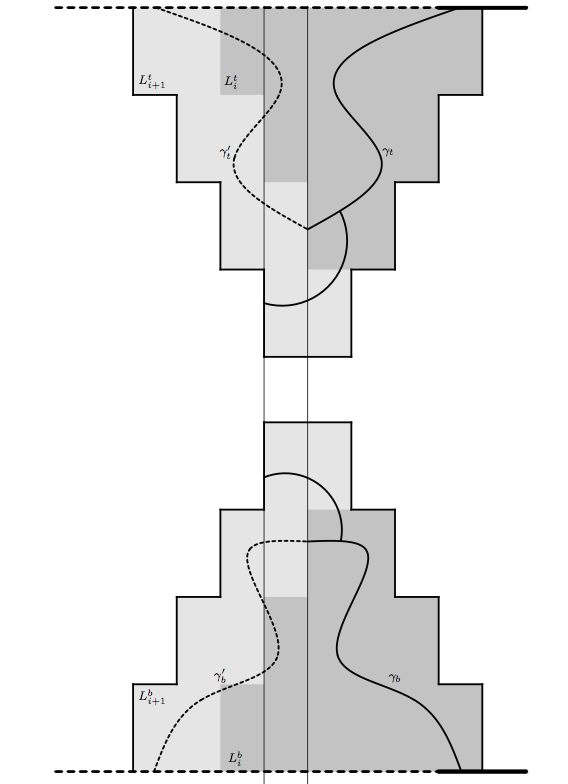}

\caption{Events $A^b_{m,n,i+1}$ and $A^t_{m,n,i+1}$ with corresponding boundary conditions.}
\label{figtopbot}
\end{figure}

Suppose now that for some $i$,
$$\phi^{0 \backslash 1}_{S_n, p_{sd},q}(A^b_{m,n,i} \cap A^t_{m,n,i}) \ge (\tfrac{1}{2} - p_{m,n - 2Cim})^{2(i+1)}.$$
Let us first notice that $p_{m,n}$ decreases with respect to the second argument because of the Domain Markov property (narrowing the strip by adding wired boundary conditions increases the probability to have a crossing in a box inside it). Thus, we can write that
\be \label{hugoadd}
\phi^{0 \backslash 1}_{S_n, p_{sd},q}(A^b_{m,n,i} \cap A^t_{m,n,i}) 
\ge (\tfrac{1}{2} - p_{m,n - 2C(i+1)m})^{2(i+1)}.
\ee

\par Let us look at the event $A^b_{m,n,i+1}$ conditioned on the events $A^b_{m,n,i}$ and $A^t_{m,n,i}$ (see Figure \ref{figtopbot}). Let $\gamma_b$ be the uppermost path from  $[0,m]\times \{0\}$ to $\{-(i+1)m\}\times  \Z$ satisfying $A^b_{m,n,i}$ and $\gamma_t$ be the lowermost path from  $[0,m]\times \{n\}$ to $\{-(i+1)m\}\times  \Z$ satisfying $A^t_{m,n,i}$.  Let $\gamma_b'$ and $\gamma_t'$ be the reflections of $\gamma_b$ and $\gamma_t$ with respect to  the line $\{-(i+1)m\}\times  \Z$.
Note that $\{-(i+1)m\}\times  \Z$ is an axis of symmetry for $L^b_{i+1}$ and $L^t_{i+1}$.

Set the boundary conditions to be free on $\g'_b$ and $\g'_t$ (this can only decrease the probability of $A^b_{m,n, i+1}$). Then, by the same reasons as for $A_{m,n,0}$, the probability to have an open path from $([0,n]\times \{0\}) \cup \g$ to the top or to the left boundary of $L^b_{i+1}$ is bigger than $\tfrac{1}{2}$.

The probability for this path to hit the top part of $\d L^b_{i+1}$ (i.e. $[-(i+2)m,-im] \times \{(i+2)Cm\} $) is smaller than the probability for $  [-m,m] \times [0, Cm] + (-(i+1)m, (i+1)Cm)$ to be crossed from top to bottom, which can be bounded above using the comparison between boundary conditions and the Domain Markov property. 
If we restrict our strip to $\Z \times [(i+1)Cm, n- (i+1)Cm]$ (all the paths will be left outside the strip) and put wired boundary conditions on its boundary, the probability $  [-(i+2)m,-im] \times [(i+1)Cm, (i+2)Cm]$ to have a vertical crossing is equal to $p_{m,n-2C(i+1)m}$. The initial domain is bigger and has smaller boundary conditions, so the probability of this event is smaller in this context.
This leads to
\benn
\phi^{0\backslash 1}_{S_n, p_{sd},q}(A^b_{m,n,i+1} | A^b_{m,n,i} \cap A^t_{m,n,i}) \ge \tfrac{1}{2} - p_{m,n-2C(i+1)m},
\eenn
and the same bound holds for $ A^t_{m,n,i+1} $. By FKG inequality, we deduce that 
\benn
\phi^{0\backslash 1}_{S_n, p_{sd},q}(A^b_{m,n,i+1} \cap A^t_{m,n,(i+1)} | A^b_{m,n,i} \cap A^t_{m,n,i}) \ge \bigl(\tfrac{1}{2} - p_{m,n-2C(i+1)m} \bigr)^2.
\eenn
Combining it with $\eqref{hugoadd}$, we deduce that 
\benn
\phi^{0\backslash 1}_{S_n, p_{sd},q}(A^b_{m,n,i+1} \cap A^t_{m,n,(i+1)}) \ge \bigl (\tfrac{1}{2} -p_{m,n-2C(i+1)m}\bigr)^{2(i+2)}.
\eenn
Letting $i$ be equal to $k$ gives the result.
\end{proof}

\begin{cor} \label {CorLem3}

Let us fix $C \ge 4$ and $m, n \in \N$ such that
\be \label{relmn}
n \ge 9 C^2 m.
\ee
Let us call $A^b_C$ the event that $[0,m] \times \{0\}$ is connected either to $\{-4Cm-m \} \times [0, 8Cm]$ or to $[-4Cm-m, m] \times \{8Cm\}$ in
 $[-4Cm-m, 4Cm-m] \times [0, 8Cm] \cap L^b_{4C}$.
 
Then,
\be
\phi^{0\backslash 1}_{S_n, p_{sd},q} (A^b_C) \ge (\tfrac{1}{2} - p_{m, C^2m})^{2(4C+1)}.
\ee
\end{cor}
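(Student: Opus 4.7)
The plan is to specialize the preceding lemma to $k = 4C$ and then observe that the event $A^b_{m,n,4C}$ is already contained (as an event on configurations) in $A^b_C$, so the bound transfers directly. Concretely I would proceed in three steps.

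First, apply the lemma with $k = 4C$. The hypothesis $2Ckm < n$ becomes $8C^2 m < n$, which is implied by $n \ge 9 C^2 m$, so the lemma gives
\benn
\phi^{0\backslash 1}_{S_n, p_{sd},q}(A_{m,n,4C}^b \cap A_{m,n,4C}^t) \ge \bigl(\tfrac{1}{2} - p_{m,\, n - 8C^2 m}\bigr)^{2(4C+1)}.
\eenn
Next, invoke the monotonicity of $p_{m,\cdot}$ in its second argument (recorded in the proof of the preceding lemma as a consequence of the Domain Markov property). The hypothesis $n \ge 9C^2 m$ gives $n - 8C^2 m \ge C^2 m$, hence $p_{m,\, n - 8C^2 m} \le p_{m,\, C^2 m}$, and
\benn
\phi(A_{m,n,4C}^b \cap A_{m,n,4C}^t) \ge \bigl(\tfrac{1}{2} - p_{m,\, C^2 m}\bigr)^{2(4C+1)}.
\eenn

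The last step is the geometric inclusion $A^b_{m,n,4C} \subseteq A^b_C$. Given a configuration in $A^b_{m,n,4C}$, take an open path $\pi \subset L^b_{4C}$ from $[0,m]\times\{0\}$ to $\{-(4C+1)m\}\times\Z$. The only piece of $L^b_{4C}$ that intersects the line $\{x = -(4C+1)m\}$ is the top staircase block $[-(4C+1)m,\, -(4C-1)m]\times[4C^2 m,\, (4C+1)Cm]$, so the endpoint of $\pi$ has $y$-coordinate at least $4C^2 m$. For $C\ge 4$ we have $4C^2 m \ge 8Cm$, so $\pi$ must cross the horizontal line $y = 8Cm$. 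Truncate $\pi$ at its first intersection with this line: the resulting initial segment lies inside $L^b_{4C}$ and in the slab $[0, 8Cm]$, and its $x$-range stays in $[-(4C+1)m, m] \subseteq [-(4C+1)m,\, 4Cm-m]$ (since $L^b_{4C}$ never extends past $x = m$). It therefore gives an open connection inside $[-4Cm-m,\, 4Cm-m]\times[0, 8Cm] \cap L^b_{4C}$ from $[0,m]\times\{0\}$ to $[-4Cm-m, m]\times\{8Cm\}$, witnessing $A^b_C$. Combining the three steps,
\benn
\phi(A^b_C) \ge \phi(A_{m,n,4C}^b) \ge \phi(A_{m,n,4C}^b \cap A_{m,n,4C}^t) \ge \bigl(\tfrac{1}{2} - p_{m,\, C^2 m}\bigr)^{2(4C+1)}.
\eenn

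This is essentially bookkeeping, so I do not expect a genuine obstacle. The one point that must be checked carefully is the geometric claim that any path witnessing $A^b_{m,n,4C}$ is forced to exit the smaller box $[-(4C+1)m,\, 4Cm-m]\times[0, 8Cm]$ through its top (and not through its sides or the bottom), which is exactly where the quadratic growth $4C^2 m$ of the staircase height together with the assumption $C \ge 4$ is used.
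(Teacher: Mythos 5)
Your overall route matches the paper exactly: specialize the preceding lemma to $k=4C$, use monotonicity of $p_{m,\,\cdot}$ in the second argument to pass from $p_{m,\,n-8C^2m}$ to $p_{m,\,C^2m}$, and finish with a geometric inclusion $A^b_{m,n,4C}\subseteq A^b_C$. The first two steps are correct. The geometric step, however, is based on a false claim.

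You assert that ``the only piece of $L^b_{4C}$ that intersects the line $\{x=-(4C+1)m\}$ is the top staircase block,'' and from this you deduce that the witnessing path $\pi$ must cross $y=8Cm$, so that truncating $\pi$ at the first crossing of $y=8Cm$ witnesses $A^b_C$. This is not true: unwinding the recursion, at level $j$ (i.e.\ for $y\in[jCm,(j+1)Cm]$) the domain $L^b_{4C}$ has $x$-range $[-(8C+1-j)m,\,(1-j)m]$, which contains $-(4C+1)m$ for \emph{every} $j\le 4C$. In particular the path can reach $x=-(4C+1)m$ already along the bottom layer $y\le Cm$ without ever touching the height $y=8Cm$; in that case your truncation step is vacuous and your argument gives no connection to the top side. (The bottom layer in fact extends all the way out to $x=-(8C+1)m$, so the truncated segment is not even guaranteed to stay in $x\ge -(4C+1)m$, contrary to what you wrote.)

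The conclusion $A^b_{m,n,4C}\subseteq A^b_C$ is nevertheless correct, but for a different reason, and it requires \emph{both} clauses in the definition of $A^b_C$. The initial segment of $\pi$, followed until it first leaves the box $B=[-(4C+1)m,(4C-1)m]\times[0,8Cm]$ (or until $\pi$ ends, if that happens inside $B$), must terminate on either the top $[-(4C+1)m,m]\times\{8Cm\}$ or the left side $\{-(4C+1)m\}\times[0,8Cm]$: the right side is unreachable because $L^b_{4C}$ stays in $x\le m<(4C-1)m$ for $C\ge 1$, and the bottom is the floor of $S_n$. This two-case disjunction is exactly what the paper records (``The built path either ends at $\{-4Cm-m\}\times[0,8Cm]$ or leaves the box from the top boundary''); your write-up only handles the top case and, worse, tries to argue that the top case is the only one, which is what fails.
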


\begin{proof}
The result follows directly from the fact that $p_{m,n}$ is decreasing in the second variable and from the previous lemma applied to $A_{m,n, 4C}$. The built path either ends at $\{-4Cm-m \} \times [0, 8Cm]$ or leaves the box from the top boundary.
\end{proof}

\begin{lemma} \label{contog}
For any choice of $C \ge 4$ and $M > 0$, we have that
\be
\sum _{x \in [-(4C+1)m, -m] \times \{0\}} \phi^{0\backslash 1}_{S_n, p_{sd},q}   (x \arr 0 ) \ge M (\tfrac{1}{2} - p_{m,C^2m})^{2(4C+1)}
\ee
for $m$ large enough and $n\in \N$ such that \eqref{relmn} holds.

\end{lemma}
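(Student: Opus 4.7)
The plan is to combine Corollary \ref{CorLem3} with Lemma \ref{inasquare}. The key observation is that the origin $0$ lies on the wired arc $\d^+ S_n$ under the $0\backslash 1$ boundary conditions, so any primal connection to the wired cluster automatically yields a connection to $0$; in particular every vertex connected to a primal open path starting on $[0,m]\times\{0\}\subset \d^+ S_n$ is itself connected to $0$.

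First, I would apply Corollary \ref{CorLem3}: on the event $E := A^b_C$, of probability at least $p_0 := (\tfrac{1}{2} - p_{m, C^2 m})^{2(4C+1)}$, there exists a primal open path $\gamma$ inside $[-4Cm-m, 4Cm-m] \times [0, 8Cm] \cap L^b_{4C}$ joining $[0,m] \times \{0\}$ to either the left wall $\{-4Cm-m\} \times [0, 8Cm]$ or the top side $[-4Cm-m, m] \times \{8Cm\}$. Selecting the uppermost--leftmost such $\gamma$ and conditioning on it, the Domain Markov property prescribes explicit boundary conditions on the subdomain $D_\gamma \subset S_n$ lying to the upper left of $\gamma$: wired on $\gamma$ and on the piece of $\d^+ S_n$ retained, free on the rest (in particular on the still-exposed segment of $\d^-_b S_n$ that contains $[-(4C+1)m, -m] \times \{0\}$). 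Since on $E$ the event $\{x \arr \gamma\}$ inside $D_\gamma$ implies $\{x \arr 0\}$ in $S_n$, averaging over $\gamma$ yields
\benn
\sum_{x \in [-(4C+1)m, -m] \times \{0\}} \phi^{0\backslash 1}_{S_n, p_{sd}, q}(x \arr 0) \;\ge\; \phi^{0\backslash 1}_{S_n, p_{sd}, q}(E) \cdot \inf_{\gamma} \sum_{x} \phi^{\mathrm{induced}}_{D_\gamma, p_{sd}, q}(x \arr \gamma),
\eenn
the infimum running over all admissible realisations of $\gamma$.

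Next, I would apply Lemma \ref{inasquare} to bound the inner sum below by $M$, uniformly in $\gamma$. For the desired $M$, fix $R$ as supplied by that lemma and assume $m \ge R$. After a translation placing the inner box $\L_R$ around the origin --- the wired arc $\{(x,0):\,x\ge 0\}$ inside $\L_R$ supplies a wired connection from $\d \L_R$ to the starting point of $\gamma$, so $\L_R \cup \gamma$ becomes a single wired set reaching from $\d \L_R$ out to distance $\gtrsim Cm$ --- one takes the outer box of the lemma of side length of order $Cm$, arranged so that its bottom-left side contains $[-(4C+1)m, -m] \times \{0\}$. The induced boundary conditions on $D_\gamma$ dominate, for increasing events such as $\{x \arr \gamma\}$, the boundary conditions required by Lemma \ref{inasquare} (free outside, wired on $\L_R \cup \gamma$), and so that lemma produces an inner sum of at least $M$. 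Chaining the estimates gives the announced bound.

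The chief obstacle is the geometric translation in the last step: fitting the rectangular region $[-4Cm-m, 4Cm-m] \times [0, 8Cm] \cap L^b_{4C}$, with $\gamma$ terminating either on its top or on its left wall, into the concentric-box set-up of Lemma \ref{inasquare}, and verifying the FKG comparison between the two families of boundary conditions. Handling the two possible endpoints of $\gamma$ in parallel --- extending $\gamma$ monotonically when it ends on the top, an operation that only enhances primal connections --- should allow the translation to go through.
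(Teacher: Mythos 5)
Your overall strategy is the paper's: condition on the uppermost path $\gamma$ produced by Corollary \ref{CorLem3}, apply Lemma \ref{inasquare} to get $\sum_x \phi(x \arr \gamma) \ge M$ uniformly in $\gamma$, and observe that $\gamma$ starts on the wired arc $\d^+ S_n$ so that $\{x \arr \gamma\}$ already gives $\{x \arr 0\}$. But the geometric arrangement you sketch in the last step is internally inconsistent. You propose to place the inner box $\L_R$ at the origin so that the wired segment $\{(x,0): x \ge 0\}$ links it to the start of $\gamma$; yet Lemma \ref{inasquare} requires the inner and outer boxes to be concentric, and the sum in that lemma runs over the bottom-left side $[-n,0]\times\{-n\}$ of $\L_n$, which after translation must coincide with $[-(4C+1)m,-m]\times\{0\}$. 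These two demands cannot both hold: if the boxes are centred at the origin, the bottom-left side of $\L_{4Cm}$ lies on $y=-4Cm$, outside the strip.

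The paper instead centres both boxes at $(-m,4Cm)$, taking $\Lambda=\L_{4Cm}+(-m,4Cm)$ (whose bottom-left side is exactly $[-(4C+1)m,-m]\times\{0\}$) and $\Lambda^0=\L_R+(-m,4Cm)$. Then $\Lambda^0$ is far from $\d^+S_n$ and cannot be hooked up to $\gamma$ through a wired chain. Instead one verifies that the uppermost $\gamma$ lies in the annulus $\Lambda\setminus\Lambda^0$, introduces an auxiliary path $\gamma'$ from $\d\Lambda$ to $\Lambda^0$ lying to the right of $\gamma$, and observes that any open path inside $\Lambda$ from a bottom-left point $x$ to $\Lambda^0\cup\gamma'$ must cross $\gamma$; together with the comparison of boundary conditions this gives $\phi^{0\backslash 1}_{S_n}(x\arr\gamma)\ge\phi^{\xi}_{\Lambda\setminus(\Lambda^0\cup\gamma')}(x\arr\Lambda^0\cup\gamma')$, which is the quantity controlled by Lemma \ref{inasquare}. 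So the link to the inner box is a Jordan-curve crossing argument rather than a wired connection, and the inner box belongs at the centre of $\Lambda$, not at $0$. Your proposal identifies the right ingredients but does not close this gap.
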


\begin{proof}

\begin{figure}[b!]
\centering
\includegraphics[width=0.6\textwidth]{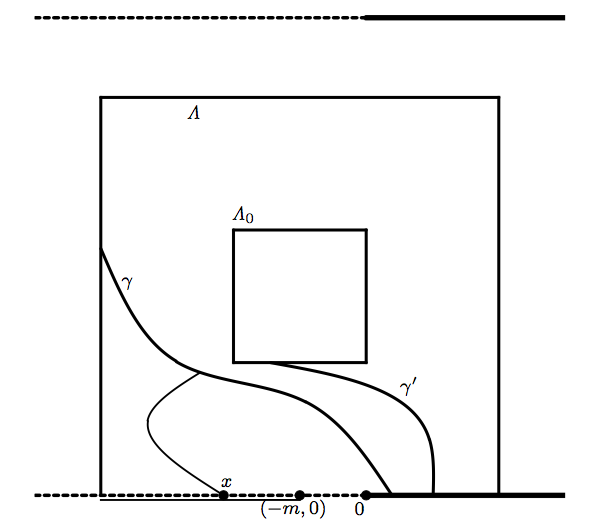}
\caption{Connection of a point $x$ to $\g$ is more probable than its connection to $\g' \cup \mathit{\Lambda}_0$}
\label{figsqinline}
\end{figure}

Let us choose $R$ according to Lemma \ref{inasquare}, $m \ge R$ and $n \ge 9C^2m$ and look at the square box
$\mathit{\Lambda} = \L_{4Cm} + (-m,4Cm)$ with the smaller box $\mathit{\Lambda} ^0 = \L_{R} + (-m,4Cm)$ inside.

Then, conditioned on  $A^b_C$, let $\gamma$ be the uppermost realisation of the path going from $[0,m] \times \{0\}$ to the left or upper parts of $ \d  \mathit{\Lambda} $ in the domain described in Corollary \ref{CorLem3} (see Figure \ref{figsqinline}). Note that $\gamma$ lies in the annulus 
$\mathit{\Lambda} \backslash \mathit{\Lambda}^0$.
Let us choose $\gamma'$ connecting $\d \mathit{\Lambda} $ and $\mathit{\Lambda} ^0$  and going to the right from $\gamma$. Then, by Lemma \ref{inasquare},
\benn
\sum _{x \in  [-(4C+1)m, -m] \times \{0\}} \!\!\!\!\!\!\!\!\!\!\! \phi^{0\backslash 1}_{S_n, p_{sd},q}   (x \arr \gamma ) \ge
\!\!\!\!\!\!\!\!\!\!\! \sum _{x \in  [-(4C+1)m, -m] \times \{0\}} \!\!\!\!\!\!\!\!\!\!\! \phi^{0}_{\mathit{\Lambda}, p_{sd},q}  (x \arr \mathit{\Lambda}^0 \cup \gamma\,|\,
\mathit{\Lambda}^0 \cup \gamma \text{ are open})
\ge M.
\eenn
Increasing the domain to $S_n$ and putting wired boundary conditions on $ \d^+ S_n$ only increases the probabilities in the sum. 
Together with Corollary \ref{CorLem3}, this gives the result.
\end{proof}

Let us now fix any $\e>0$ and define $\rho = \rho(\e)>1$ and $C = C(\rho) >4$ large enough that 
\be \label{Cro1}
(C+1) \le \rho^C
\ee
and
\be \label{Cro2}
C \ge 2\left(\tfrac{\rho}{\rho-1}\right)^2
\ee 
and look at the set
\benn
K_n = \{ k \in \N, k \le \log_\rho  \tfrac{n}{9C^2}, p_{\rho^k (1-\rho),n} \le \tfrac{1}{4}\}.
\eenn
We will prove that we are either in Case 1 or in Case 2 of Lemma \ref{twocases} depending on whether $|K_n|$ is large or not.
\begin{lemma} \label{case1pf}
Fix  $R>0$. For any choice of $\rho >1$ and for any $n$ large enough, if $|K_n| \ge \tfrac{1}{2} \log_\rho \tfrac{n}{9C^2}$, then
\benn
\sum _{x \in [-n,0)} \phi^{0\backslash 1}_{S_n, p_{sd},q}   ((x,0) \arr 0 ) \ge R \log n.
\eenn
\end{lemma}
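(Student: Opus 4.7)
The plan is to apply Lemma \ref{contog} for each $k \in K_n$ with $m=\rho^k$, and sum the resulting estimates over all admissible $k$, exploiting bounded overlap between the resulting intervals. First, I would fix a large constant $M$ (to be determined at the very end) and the corresponding threshold $m_0=m_0(M)$ coming out of Lemma \ref{contog}. For every $k \in K_n$ with $\rho^k \ge m_0$, the constraint $k \le \log_\rho(n/(9C^2))$ is exactly the condition \eqref{relmn} with $m=\rho^k$, and membership of $k$ in $K_n$ forces the relevant crossing probability to be at most $1/4$, so $(\tfrac{1}{2}-p_{\rho^k,C^2\rho^k})^{2(4C+1)} \ge 4^{-2(4C+1)}$. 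Lemma \ref{contog} then yields
\benn
\sum_{x \in I_k} \phi^{0\backslash 1}_{S_n, p_{sd}, q}((x,0) \arr 0) \ge M \cdot 4^{-2(4C+1)},
\eenn
where $I_k := [-(4C+1)\rho^k, -\rho^k] \times \{0\}$.

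Next, I would control the multiplicity with which the family $\{I_k\}$ covers $[-n,0)$. A point $(x,0)$ with $x<0$ lies in $I_k$ if and only if $\rho^k \le |x| \le (4C+1)\rho^k$, so the admissible $k$ form a discrete interval of length at most $\log_\rho(4C+1)$; hence each $(x,0)$ belongs to at most $N := \lceil \log_\rho(4C+1)\rceil + 1$ of the $I_k$. One also checks that $(4C+1)\rho^k \le n$ for every $k \le \log_\rho(n/(9C^2))$ (since $(4C+1)/(9C^2) < 1$ when $C\ge 1$), so each $I_k \subset [-n,0) \times \{0\}$. Summing the per-interval estimate over all $k \in K_n$ with $\rho^k \ge m_0$ and dividing by $N$ produces
\benn
\sum_{x \in [-n,0)} \phi^{0\backslash 1}_{S_n, p_{sd}, q}((x,0) \arr 0) \ge \frac{M \cdot 4^{-2(4C+1)}}{N}\bigl(|K_n| - \log_\rho m_0\bigr).
\eenn

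Finally, I would invoke the hypothesis $|K_n| \ge \tfrac{1}{2}\log_\rho(n/(9C^2))$. Because $m_0$ depends only on $M$, $C$, and $\rho$, not on $n$, for all sufficiently large $n$ we have $|K_n| - \log_\rho m_0 \ge \tfrac{1}{4\log\rho}\log n$. It then suffices to choose $M$ large enough that $M \cdot 4^{-2(4C+1)}/(4N\log\rho) \ge R$, which is possible since $N$, $C$, $\rho$ are fixed and $R$ is given in advance. The only step with any subtlety is the overlap bookkeeping in the middle paragraph; aside from that, the argument is a direct geometric summation of the estimate from Lemma \ref{contog}, and the finite ``start-up cost'' $\log_\rho m_0$ is harmlessly absorbed by enlarging $n$.
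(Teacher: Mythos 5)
Your proposal is correct and follows essentially the same route as the paper: apply Lemma \ref{contog} with $m=\rho^k$ at the scales $k\in K_n$ (each interval $I_k$ contributing at least $M\,4^{-2(4C+1)}$ once $\rho^k$ exceeds the threshold of that lemma), sum over $k$, and choose $M$ proportional to $R$. The only difference is bookkeeping: the paper uses \eqref{Cro1} to extract a pairwise disjoint subfamily of the intervals $I_k$ of cardinality about $\tfrac{1}{10C}\log_\rho\tfrac{n}{9C^2}$, whereas you keep all of them and divide by the bounded overlap $\lceil\log_\rho(4C+1)\rceil+1$; both give the stated bound.
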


\begin{proof}
Inequality \eqref{Cro1} together with the condition 
$|K_n| \ge \tfrac{1}{2} \log_\rho  \tfrac{n}{9C^2}$ allows to pick $\tfrac{1}{10C} \log_\rho  \tfrac{n}{9C^2}$ indices $k \in K_n$ such that the intervals
$$I_k = [-(4C+1)\rho^k, -\rho^k]$$
 are disjoint.
Let us choose $M = 20RC 4^{2(4C+1)} $ and apply Lemma \ref{contog} to each $I_k$. Then, we obtain 
\begin{align}
\sum _{x \in [-n,0)} \phi^{0\backslash 1}_{\H, p_{sd},q}   ((x,0) \arr 0 ) &\ge 
\sum_k \sum _{x \in I_k} \phi^{0\backslash 1}_{\H, p_{sd},q}   ((x,0) \arr 0 ) 
\nonumber \\ &\ge \nonumber
\frac{M}{10C 4^{2(4C+1)}} \log_\rho  \tfrac{n}{9C^2} \ge R \log n.
\end{align}
Notice that $R$ is positive and can be picked arbitrary large for $n$ big enough.
\end{proof}

\begin{lemma}
\label{case2lems}
Take $n$ large enough and suppose that $|K_n| \le \tfrac{1}{2} \log_\rho \tfrac{n}{C}$, then  
\benn
\phi^0_{\H,p_{sd}-\e,q} (0 \arr \d \L_n) < n^{-20}.
\eenn
\end{lemma}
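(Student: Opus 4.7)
The plan is to convert the hypothesis ``few scales lie in $K_n$'' into a sharp-threshold statement at each of the remaining (``good'') scales, and then to combine these per-scale estimates into a polynomial decay of the connection probability at $p_{sd}-\e$.

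First I would unpack the hypothesis: since $|K_n|\le\tfrac{1}{2} \log_\rho(n/C)$ and $k$ ranges over $[1,\log_\rho(n/9C^2)]$, at least a fraction of order $\log n$ of the scales satisfy $k\notin K_n$, i.e.\ $p_{\rho^k,n}(p_{sd})>\tfrac{1}{4}$. Using \eqref{Cro1} to ensure geometric separation, I would extract a sub-family $k_1<k_2<\cdots<k_N$ with $N\ge c_0\log n$ along which the strip-boxes $B_{k_j}:=[-\rho^{k_j},\rho^{k_j}]\times[0,C\rho^{k_j}]$ are all ``good'': their vertical crossing probability exceeds $\tfrac{1}{4}$ at $p_{sd}$.

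Next I would prove a sharp-threshold estimate at each good scale. The vertical crossing event of $B_{k_j}$ is monotone and invariant under horizontal translations, so Russo's formula combined with a Talagrand/BKKKL-type bound on influences (bounding the maximum influence via the finite-energy property \eqref{alltoclosed}) should yield
\benn
\frac{d}{dp}\log \phi^1_{S_n,p,q}(\text{vertical crossing of } B_{k_j}) \ge c\, k_j \log \rho
\eenn
uniformly on a neighbourhood of $p_{sd}$. Integrating on $[p_{sd}-\e,p_{sd}]$ then gives $\phi^1_{S_n,p_{sd}-\e,q}(\text{vertical crossing of } B_{k_j})\le \rho^{-c\e k_j}$.

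Finally I would use a geometric decomposition. Comparing boundary conditions via \eqref{boundcond1} to pass from $\H$ to the strip $S_n$, the event $\{0\arr \d\L_n\}$ forces the vertical crossing of one of the $B_{k_j}$ for the largest admissible index $j^*$ (possibly after taking a union over a bounded number of horizontal translates of $B_{k_{j^*}}$). Hence
\benn
\phi^0_{\H,p_{sd}-\e,q}(0\arr \d\L_n)\le C'\rho^{-c\e k_{j^*}}\le n^{-c\e}
\eenn
for some constant $C'$. Choosing the parameters so that $c\e>20$, which is possible via the flexibility in $\rho$ and $C$ guaranteed by \eqref{Cro1}--\eqref{Cro2}, concludes the proof for $n$ large enough.

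The main obstacle is the sharp-threshold step: the vertical crossing event is only horizontally translation invariant, so the transitive-symmetry form of the BKKKL inequality does not apply directly. One must use Talagrand's inequality together with a finite-energy bound on the maximum influence, and take care to make the resulting constants uniform in the scale $k_j$. A secondary difficulty is the geometric decomposition in the last step, where one must argue that the one-arm event really does imply a strip-box vertical crossing at the largest accessible scale, possibly via a bounded union over translates combined with FKG.
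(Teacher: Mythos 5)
Your proposal takes a genuinely different route from the paper and, as written, has two gaps that I do not think can be repaired with the tools you invoke.

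First, what the paper actually does. It never uses a Talagrand/BKKKL sharp--threshold inequality for box crossings. Instead it applies the Hamming--distance differential inequality \eqref{Hamdist} directly to the one--arm event $\{0\arr\d\L_n\}$. Concretely, for each good scale $k\notin K_n$ the half--annulus crossing event $\mathcal{A}_k$ inside the annulus $A_k$ has probability $\ge\tfrac{1}{32}$ at $p_{sd}$ (glue two vertical rectangle crossings, each of probability $>\tfrac14$ since $k\notin K_n$, with a square crossing of probability $\ge\tfrac12$ provided by self--duality, via FKG). At $p_{sd}$, self--duality also gives that $\mathcal{A}_k$ produces a dual--open circuit inside $A_k$ disconnecting $0$ from $\d\L_n$. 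Since there are $\gtrsim\log n$ good scales and the $A_k$ are disjoint, the expected number of disjoint dual obstructions, and hence $\phi^0_{\H,p_{sd},q}(H_{0\arr\d\L_n})$, is $\gtrsim\log n$. Plugging this into \eqref{Hamdist}, integrating over $[p_{sd}-\e,p_{sd}]$ and using that $\rho$ was chosen close enough to $1$ makes the exponent exceed $20$. This route needs no control on maximum influences whatsoever.

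Now the two gaps in your plan.
(i) The sharp--threshold step. To extract $\tfrac{d}{dp}\log p_{m,n}\gtrsim\log m$ from Talagrand or BKKKL you need the maximum influence of a single edge on the vertical crossing event to decay \emph{polynomially in the scale}. The finite--energy property \eqref{alltoclosed} only bounds each influence by a constant uniform in $m$; it does not make it small as $m\to\infty$. A polynomial bound on the maximum influence is essentially a one--arm/four--arm estimate, which is precisely the type of a priori input (RSW) that this paper is designed to avoid. You flag this as ``the main obstacle'', but without new input the step is circular, not merely delicate.
(ii) The geometric decomposition. The claimed inclusion ``$\{0\arr\d\L_n\}$ forces a vertical crossing of some translate of $B_{k_{j^*}}$'' is simply false in $\H$: a path from $(0,0)$ to $\d\L_n$ can stay in a thin neighbourhood of $\d\H$ and exit through $\{(\pm n,y):0\le y\le n\}$ without ever crossing any box $[-m,m]\times[0,Cm]$ from bottom to top, and a bounded union over horizontal translates does not restore the inclusion. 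The paper sidesteps this entirely by arguing in the contrapositive: it counts disjoint dual circuits that \emph{block} the one--arm event (lower--bounding the Hamming distance), rather than asking the one--arm event to realize a specific box crossing.
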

\par To prove this theorem we need an additional result.
Let us define the {\it{Hamming distance}}
between two configurations $\omega$ and $\omega'$ by
\benn
H(\omega,\omega') \coloneqq \sum_{e \in E(\Omega)} |\omega(e)-\omega'(e)|,
\eenn
and the Hamming distance between an event $A$ and a configuration $\omega$ ] by
\benn
H_A(\omega) \coloneqq \inf_{\omega' \in A} H(\omega,\omega').
\eenn
Then, we can write an inequality similar to \eqref{difineq} in the terms of the expected Hamming distance \cite{Gr06}:
\be \label{Hamdist}
\frac{d}{dp} \log \phi^\xi_{G,p,q}(A) \ge  \frac{1}{p(1-p)} \phi^\xi_{G,p,q}(H_A).
\ee
We now turn to the proof.

\begin{proof}
Let us study the probability of the following event:
\be
\mathcal{A}_k = 
\left\{[-2\rho^{k+1}, -2 \rho^k]\times \{0\} \arr [2\rho^{k},2\rho^{k+1}]\times \{0\} \text{ in } A_n
\right\}
\ee
where $A_k$ is the half-annulus
$$
A_k = [-2\rho^{k+1},2 \rho^{k+1}]\times [0, C\rho^k(\rho-1)] \backslash [-2\rho^{k}, 2\rho^{k}]\times [0, C\rho^{k-1}(\rho-1)]
$$
(see Figure \ref{fighalfan}).

\begin{figure}[b!]
\centering
\includegraphics[width=0.5\textwidth]{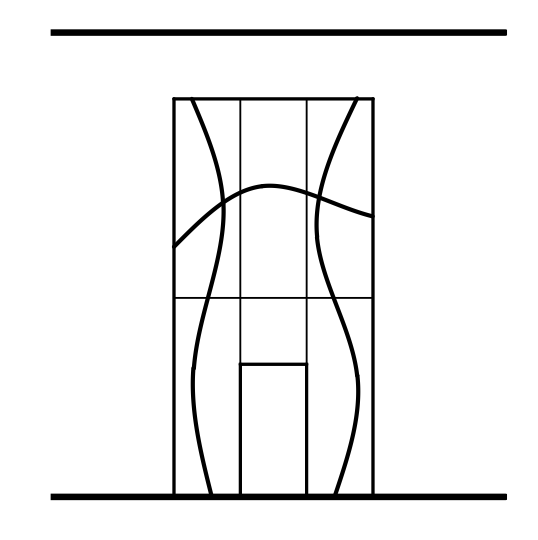}
\caption{Realisation of $\mathcal{A}_k$}
\label{fighalfan}
\end{figure}

\par The rectangles $[-2\rho^{k+1}, - 2\rho^k]\times [0, C\rho^k(\rho-1)]$ and $[2\rho^{k}, 2\rho^{k+1}]\times [0, C\rho^k(\rho-1)]$ are crossed in the vertical direction with probability $p_{\rho^k (1-\rho)}$ each. 
Inequality \eqref{Cro2} implies that $A_k$ contains a square of size $4 \rho^{k+1}$ that is crossed with probability bigger than $\tfrac{1}{2}$. 
Altogether, this gives a bound
$\phi^1_{S_n, p_{sd},q}(\mathcal{A}_k) \ge \tfrac{1}{2} p^2_{\rho^k (1-\rho)}$.
The assumption  on the size of $K_n$ implies that $\phi^1_{S_n, p_{sd},q}(\mathcal{A}_k) \ge \tfrac{1}{32}$ for at least $\tfrac{1}{2C}\log_\rho \tfrac{n}{9C^2}$ values of $k$.
Notice also that for all these indices, $A_k$ lies in $\Lambda_n$ and that the $A_k$ are disjoint.

We study this probability for $p = p_{sd}$ so $\mathcal{A}_k$ implies the existence of dual-open circuit in $\H^*$ disconnecting $0$ from $\d \L_n$. The expected number of such disjoint circuits is bigger than $\tfrac{1}{64}\log_\rho \tfrac{n}{C}$, since the $A_k$ are disjoint. 
This number bounds from below the expectation of the Hamming distance of the event that $0$ is connected to $\d \L_n$ in the dual lattice:
\benn
\phi^{1^*}_{\H, (p_{sd})^*\!\!,q} (H_{0 \overset{*}\arr \d \L_n}) = \phi^0_{\H, p_{sd},q} (H_{0 \arr \d \L_n}) \ge \tfrac{1}{64}\log_\rho \tfrac{n}{C}.
\eenn
Here, we changed to the primal lattice with free boundary conditions using the self-duality of the model for $p_{sd}$. We apply \eqref{Hamdist} to obtain
\benn
\frac{d}{dp} \log \phi^0_{\H, p_{sd},q} [0 \arr \d \L_n] > \frac{1}{64}\log_\rho \frac{n}{C}.
\eenn
By monotonicity, this inequality extends to every $p \in [p_{sd}-\e', p_{sd}]$ for $\e>0$.
Integrating the previous inequality gives:
\benn
\int_{p_{sd}-\e}^{p_{sd} }\frac{d}{dp} \log \phi^0_{\H, p,q} [0 \arr \d \L_n] \ge \frac{\e}{64}\log_\rho\frac{n}{C},
\eenn
or, put differently,
\benn
\phi^0_{\H, p_{sd}-\e,q} [0 \arr \d \L_n] \le \left( \tfrac{C}{n}\right)^{\frac{\e}{64\log \rho}}.
\eenn
The value of $\rho$ was chosen close enough to $1$ that
\benn
\frac{\e}{64\log \rho} >20.
\eenn
This concludes the proof.
\end{proof}

\bibliographystyle{alpha}

\end{document}